\newcommand{\subparagraph}{}\usepackage[compact]{titlesec}
\theoremstyle{plain}
\theoremstyle{definition}
\theoremstyle{plain}
\theoremstyle{definition}
\providecommand{\definitionname}{Definition}
\providecommand{\lemmaname}{Lemma}
\providecommand{\theoremname}{Theorem}
\providecommand{\remarkname}{Remark}
\newtheorem{definition}{Definition}
\newtheorem{lemma}{Lemma}
\newtheorem{theorem}{Theorem}
\newtheorem{assumption}{Assumption}
\newtheorem{corollary}{Corollary}
\def\QED{\mbox{\rule[0pt]{1.3ex}{1.3ex}}}
\newcommand{\prob}[1]{\mathbb{P}\!\left(#1\right)}
\begin{document}
\title{Selective Fair Scheduling over Fading Channels}

\newcounter{one}
\setcounter{one}{1}
\newcounter{two}
\setcounter{two}{2}

\author{\IEEEauthorblockN{Apostolos Destounis$^{\fnsymbol{one}}$,  Georgios S. Paschos$^{\fnsymbol{one}}$, David Gesbert$^{\fnsymbol{two}}$
		\\}
	\IEEEauthorblockA{$^{\fnsymbol{one}}$ Mathematical and Algorithmic Sciences Lab, France Research Center, Huawei Technologies Co. Ltd. \\
		$^{\fnsymbol{two}}$ EURECOM\\
		emails: $^{\fnsymbol{one}}$ \{apostolos.destounis, georgios.paschos\}@huawei.com, $^{\fnsymbol{two}}$ gesbert@eurecom.com
	}
}
\maketitle


\begin{abstract}
Imposing fairness in resource allocation incurs a loss of system throughput, known as the Price of Fairness ($PoF$). In wireless scheduling, $PoF$ increases when serving users with very poor channel quality because the scheduler wastes resources trying to be fair. This paper proposes a novel resource allocation framework to rigorously address this issue. We introduce \emph{selective fairness}: being fair only to selected users, and improving $PoF$  by momentarily blocking the rest. 
We study the associated admission control problem of finding the user selection that minimizes $PoF$ subject to selective fairness, and show that this combinatorial problem can be solved efficiently if the feasibility set satisfies a condition; in our model it suffices that the wireless channels are stochastically dominated. 
Exploiting selective fairness,  we design a stochastic framework where we minimize $PoF$ subject to an SLA, which ensures that an \emph{ergodic subscriber} is served frequently enough. In this context, we propose an online policy that combines the \emph{drift-plus-penalty} technique with \emph{Gradient-Based Scheduling} experts, and we prove it achieves the optimal $PoF$. Simulations show that our intelligent blocking outperforms by 40$\%$ in throughput previous approaches which satisfy the SLA by blocking low-SNR users.
\end{abstract}


\vspace{-0.1in}
\section{Introduction}

Throughput efficiency and fairness is a well-explored fundamental tradeoff in wireless communications \cite{Kushner}.
The status quo is to use \emph{opportunistic schedulers} to exploit the fading peaks 
and strike a balance between throughput and fairness 
 \cite{Kushner,tseDumbAntennas}. While  fairness is important, it   is also known to negatively impact the system throughput, a phenomenon quantified by the {Price of Fairness} ($PoF$) \cite{pof}. 
In wireless downlink systems $PoF$ increases steeply when
 a base station attempts to serve  users with {\em unreasonably} poor channel quality, for instance users trying to access spectrum from strongly shadowed areas (e.g. building basements, tunnels), remote areas (cell-edge users), or operating with ill-functioning or sub-standard RF equipments. 
Our simulations show a 10$\%$-40$\%$ throughput degradation when adding 1-10  users with SNR 20dB less. 
Since  service quality is anyway low in these cases, the current approach is to block users when their SNR drops below a threshold.
\emph{We propose a framework for proactively  blocking users, based on optimizing the $PoF$ subject to fairness and a probabilistic service guarantee. For the same quality level, our scheme yields a $40\%$ total throughput gain over the current approach, unraveling significant room for optimization which was not previously explored.}

We start with a $K$-user scheduling problem over a wireless fading channel. The service must be fair, but our key idea is that it is allowed temporarily to exclude some users from service. In this context, we introduce a novel fairness objective called \emph{selective fairness}: a subset of users $S\subseteq \{1,\dots, K\}$ is fairly treated, while the remaining users $S^c$ receive no service. Imposing selective fairness for $s_{\min}$ users as a constraint, we consider the minimization of $PoF$ (equivalent to system throughput maximization). Solving this problem is non-trivial because the throughput contribution of a user to a fairness-constrained system is very complicated. Mathematically, the problem is of combinatorial nature, potentially involving  the solution  of $O(2^K)$  large convex programs. We show, however, that if the system satisfies the \emph{subspace monotonicity} property, then the problem can be solved efficiently. We further prove that the subspace monotonicity property is satisfied when the fading channels are \emph{stochastically ordered}, a practical case of interest.
Then  we propose an online policy, referred to as \emph{selective GBS}, which is based on $O(K)$ number of experts, i.e.,  online simulated policies that provide insight for good scheduling decisions. The throughput vector obtained by selective GBS is shown to converge to  the optimal solution of the $PoF$ minimization.

The initial theoretical  framework assumes we know how many users to block, which is impractical. It is more reasonable to regulate the blocking according to a probabilistic service guarantee over multiple scheduling problems. The system will block users when their channel quality happens to be very poor, but also ensure that they are served most of the times they attempt to access the service. In the second part of the paper we extend selective fairness to a stochastic setting, where the blocking is controlled by a virtual queue evolving across scheduling problems.
 Combining the queue with selective GBS, we design an online policy, referred to as \emph{Online Selective Fair (OSF)} scheduler, that maximizes system performance while satisfying the probabilistic guarantee and being $\alpha$-fair to selected users. This provides a rigorous framework to alleviate the problem of $PoF$ in wireless scheduling.

\subsection{Related work}

The concept of opportunistic scheduling dates back to 1995 \cite{knopp}. The  \emph{Gradient-based Scheduler} (GBS) was proposed and analyzed early in the 2000s, cf.~\cite{Vijay, stolyar, Huang06}. It has been shown to provide a stochastic approximation of  the optimal solution of the Network Utility Maximization problem \cite{Kushner, Vijay, stolyar}, while it can also provide  good short-term fairness performance by using  discounting factors when averaging \cite{tseDumbAntennas, eryilmaz17}. For these reasons, and also for its great simplicity, GBS has become the de facto scheduling policy in 3G base stations \cite{nomadic}. To capture frequency-time resource blocks in LTE systems,  GBS was later extended to a multichannel version \cite{Huang06, leith,capozzi13}, keeping the original properties. Prior work has shown how to extend  GBS  to handle systems with many  antennas, which is necessary for  the 4G and emerging 5G wireless networks \cite{neelycaire}. As of today GBS remains the prominent practical scheduler, therefore we restrict our approach to be backward compatible with  GBS. 

It is anecdotally known  that most operators tune the GBS to achieve \emph{proportional fairness}, a tradeoff between maximizing total throughput and providing equal throughput shares to all users \cite{Kushner,kelly}. When some users have much lower channel quality, maintaining proportional fairness results in a  reduction of system throughput because the base station is forced to assign a great number of resources to them with small throughput return. 
Prior work  has analyzed the phenomenon of $PoF$ in general resource allocation problems \cite{pof,bertsimas}.  
In this paper, we are interested to judiciously exclude some users from service in order to minimize $PoF$. 
To our knowledge, there exist no prior work studying the optimization of $PoF$. 
We find that in systems with $5\%$ of users with very low channel quality, $PoF$ optimization can improve total throughput by $40\%$ over the existing state of art which simply blocks low-SNR users below a threshold. This gain is attributed to the fact that the optimal set of users to block at each realization varies from case to case, and it can not be determined by a simple predefined SNR threshold for blocking.

Our work is related to the literature of \emph{admission control in stochastic networks} and \emph{call admission control in cellular networks}, however there are important differences. Works in stochastic networks focus mainly on admitting  fractions of elastic traffic 
cf.~\cite{admissionLTE, Li05}, while in our case we admit \emph{a number} of users. 
Regarding the combinatorial call admission control for cellular networks, the most relevant work to ours is \cite{Bonald03}, where the authors discuss 
 admission control and blocking rates with opportunistic scheduling and evaluate the performance of simple mechanisms. In this paper, we derive a low-complexity policy that explicitly maximizes the cell spectral efficiency subject to fairness and a blocking constraint.   
More broadly, a differentiating factor of our work from existing literature is the consideration of special fairness constraints.
\section{System model}
\subsection{Wireless downlink}

We consider a \emph{wireless downlink}  with one transmitting base station and $K$ receiving users. Time is slotted $t=1,2,\dots$, and at each time slot, the 
base station can transmit data to \emph{one of the users} with the ultimate goal to optimize the time-average data transmissions. An extension to simultaneous service of multiple users is possible via \cite{Huang06, leith,capozzi13}, but we avoid it in the interest of presentation clarity.

If user $k$ is scheduled at slot $t$, data is sent to this user at a transmission rate  $R_k(t)\in \mathcal{R}$, where $\mathcal{R}=\{r_1,r_2,\dots\,r_L\}$ is a finite set of possible transmission rates. The vector $\boldsymbol R(t)\in \mathcal{R}^K$ is random, i.i.d. over time, and its  randomness is attributed to the wireless channel fading, thus  it is  independent of the past choices of the base station. The realization of $\boldsymbol R(t)$ is provided to the base station just before the scheduling decision is made, as it is customary in contemporary systems.

Let $I_k^{\pi}(t)\in\{0,1\}$ denote  the scheduling decision at time $t$ regarding  user $k$  under scheduling policy $\pi$, where $I_k^{\pi}(t)$ is 1 if user $k$ is scheduled, and 0 otherwise. A policy that schedules only one user at each slot, i.e., satisfies $\sum_k I_k^{\pi}(t)\leq 1,~ \forall t$, is called \emph{feasible}, and we denote with $\Pi$ the set of all feasible policies.
In our model, the base station always has available data for each user,\footnote{If we replace GBS with a \emph{max-weight}-type policy, it is possible to generalize our work to stochastic arrivals using the framework in \cite{georgiadis06}.} therefore the instantaneous  rate of data received by user $k$ during slot $t$  is
$\mu_k^{\pi}(t)=R_k(t)$ if $I_k^{\pi}(t)=1$ and zero otherwise, 
the   accumulated user  throughput at $t$ 
\[
\overline{x}_k^{\pi}(t)=\frac{\sum_{\tau=1}^t \mu_k^{\pi}(\tau)}t,
\]
and the \emph{user $k$  throughput}  is  
${x}_k=\liminf_{t\to\infty} \overline{x}_k^{\pi}(t)$. 
The vector of user throughputs, denoted with $\boldsymbol{x}$, is our key performance metric. 

\begin{definition}[Feasible throughputs]
The set of feasible throughputs  $\mathcal{X}$ is the set of all  throughput vectors $\boldsymbol{x}$ that can be achieved by any policy in $\Pi$.
\end{definition}


{
Let $p_{\boldsymbol r}=\prob{\boldsymbol R(t)=\boldsymbol r}, \boldsymbol r\in\mathcal{R}^K$ denote the probability distribution of channel rate vectors. 
By considering all slots with $\boldsymbol R(t)=\boldsymbol r$ and scheduling  user $k$ with probability $\phi_{k\boldsymbol{r}}$, taking liminf leads to the \emph{convex} set $\mathcal{X}$:
\begin{equation}\label{eq:feasible}
\mathcal{X}=\left\{\boldsymbol x\geq \boldsymbol 0 \left| 
\begin{array}{l}
x_k=\sum_{\boldsymbol r\in\mathcal{R}^K}\phi_{k\boldsymbol{r}}p_{\boldsymbol r} r_k \\
0\leq \phi_{k\boldsymbol{r}} \leq 1\\
 \sum_k\phi_{k\boldsymbol{r}}=1, \forall \boldsymbol r\in \mathcal{R}^K
 \end{array}\right. \right\}.
\end{equation}

\begin{theorem}[Feasible throughputs \cite{georgiadis06}]
The set of feasible throughputs  $\mathcal{X}$ is given by  set \eqref{eq:feasible}.
\end{theorem}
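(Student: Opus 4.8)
The plan is to prove the set identity by showing the two inclusions separately. Write $\mathcal{X}'$ for the convex set on the right-hand side of \eqref{eq:feasible}; I must establish $\mathcal{X}' \subseteq \mathcal{X}$ (achievability) and $\mathcal{X} \subseteq \mathcal{X}'$ (converse). The workhorse in both directions is the strong law of large numbers (SLLN) applied to the i.i.d. channel process $\boldsymbol R(t)$, together with the crucial fact that, since the channel is exogenous and independent of past decisions, the empirical frequency of each state $\boldsymbol r$ converges almost surely to $p_{\boldsymbol r}$ regardless of the policy in force.

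For achievability, I would fix $\boldsymbol x \in \mathcal{X}'$ with associated coefficients $\phi_{k\boldsymbol r}$ and construct the \emph{stationary randomized} policy that, whenever $\boldsymbol R(t)=\boldsymbol r$, serves user $k$ with probability $\phi_{k\boldsymbol r}$, drawn independently across slots. This is feasible because $\sum_k \phi_{k\boldsymbol r}=1$. Conditioned on the state being $\boldsymbol r$, the expected per-slot rate to user $k$ is $\phi_{k\boldsymbol r} r_k$, so the unconditional expected rate is $\sum_{\boldsymbol r} p_{\boldsymbol r}\phi_{k\boldsymbol r} r_k = x_k$. Since the per-slot quantities $\mu_k^\pi(t)$ are then i.i.d. (the state is i.i.d. and the randomization is fresh each slot), the SLLN gives $\overline{x}_k^\pi(t)\to x_k$ almost surely, so the liminf defining the throughput equals $x_k$ and $\boldsymbol x\in\mathcal{X}$.

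For the converse, take an arbitrary feasible $\pi$ with throughput $\boldsymbol x$. Writing $N_{\boldsymbol r}(T)=\sum_{\tau\le T}\mathbbm{1}\{\boldsymbol R(\tau)=\boldsymbol r\}$ and the empirical scheduling frequencies $\hat\phi_{k\boldsymbol r}(T)=N_{\boldsymbol r}(T)^{-1}\sum_{\tau\le T,\,\boldsymbol R(\tau)=\boldsymbol r} I_k^\pi(\tau)$, I obtain the exact identity $\overline{x}_k^\pi(T)=\sum_{\boldsymbol r} r_k\,\frac{N_{\boldsymbol r}(T)}{T}\,\hat\phi_{k\boldsymbol r}(T)$, where $0\le\hat\phi_{k\boldsymbol r}(T)\le 1$ and $\sum_k\hat\phi_{k\boldsymbol r}(T)\le 1$. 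Because there are only finitely many pairs $(k,\boldsymbol r)$, compactness of $[0,1]^{KL^K}$ lets me pass to a single subsequence $T_j\to\infty$ along which every $\hat\phi_{k\boldsymbol r}(T_j)$ converges to a limit $\phi^\star_{k\boldsymbol r}$ inheriting the same constraints; together with $N_{\boldsymbol r}(T_j)/T_j\to p_{\boldsymbol r}$ this produces a point $\boldsymbol x^\star$ with $x^\star_k=\sum_{\boldsymbol r} r_k p_{\boldsymbol r}\phi^\star_{k\boldsymbol r}$ lying in \eqref{eq:feasible}, and by construction $x^\star_k\ge\liminf_T\overline{x}_k^\pi(T)=x_k$, i.e. $\boldsymbol x\le \boldsymbol x^\star$.

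The hard part will be concluding $\boldsymbol x\in\mathcal{X}'$ itself rather than merely $\boldsymbol x\le\boldsymbol x^\star$, and I expect two subtleties here. First, the throughput liminf is taken coordinatewise, so no single time subsequence need simultaneously realize the liminf of every user; this is precisely why I extract a convergent subsequence of the \emph{frequency} vectors and can only guarantee a dominating feasible point $\boldsymbol x^\star$. Second, the strict constraint $\sum_k\phi_{k\boldsymbol r}=1$ must be reconciled with the possibility that $\pi$ idles, giving $\sum_k\hat\phi_{k\boldsymbol r}(T)<1$. Both are handled by the downward-closedness of the region: any idle probability, or any slack $\boldsymbol x^\star-\boldsymbol x\ge\boldsymbol 0$, can be absorbed by reassigning scheduling mass to a zero-rate (null) action — the standard convention under which idling is available — so the equality constraints can always be met while matching the exact throughput $\boldsymbol x$. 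This closes $\mathcal{X}\subseteq\mathcal{X}'$ and completes the identity.
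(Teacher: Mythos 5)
Your proof skeleton is sound and in fact matches the standard argument for this result (the paper itself gives no proof: it cites \cite{georgiadis06} and sketches only the achievability construction in the sentence preceding the theorem). The achievability direction and the converse up to the domination $\boldsymbol x \le \boldsymbol x^\star$ are correct. The gap is in the closing step. You resolve both of your flagged subtleties by appealing to ``downward-closedness of the region,'' but the set \eqref{eq:feasible} as written is \emph{not} downward closed: its constraint is $\sum_k \phi_{k\boldsymbol r}=1$ with the sum running over the $K$ real users, and no null action appears in the definition. Concretely, for $K=1$ the set \eqref{eq:feasible} is the singleton $\{\mathbb{E}[R_1(t)]\}$, whereas $\Pi$ (defined by $\sum_k I_k^\pi(t)\le 1$, so idling is allowed) achieves every throughput in $[0,\mathbb{E}[R_1(t)]]$; more generally $\boldsymbol 0\in\mathcal{X}$ but $\boldsymbol 0\notin$ \eqref{eq:feasible} whenever all rates are positive. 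Introducing a zero-rate ``null user'' to absorb the slack is not a convention the model supplies; it silently replaces \eqref{eq:feasible} by the relaxed region in which $\sum_k\phi_{k\boldsymbol r}\le 1$. That relaxed region \emph{is} downward closed, and for it your subsequence argument does complete the proof --- but it is a different set from the one in the theorem.

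Note also that the coordinatewise-liminf subtlety you identified cannot be absorbed even if idling were forbidden: with two users, constant unit rates, and a non-idling policy that alternates service over exponentially growing epochs, the vector of liminf throughputs satisfies $x_1+x_2<1$, strictly below the segment that \eqref{eq:feasible} describes, yet it belongs to $\mathcal{X}$ by definition. So the inclusion $\mathcal{X}\subseteq$ \eqref{eq:feasible} is genuinely false under the paper's literal definitions, and no argument can close it. The correct statement --- the one your argument actually establishes, and the form in which the result appears in \cite{georgiadis06} --- characterizes $\mathcal{X}$ as the downward closure of \eqref{eq:feasible}, equivalently \eqref{eq:feasible} with $\sum_k\phi_{k\boldsymbol r}\le 1$. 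Since the paper only maximizes componentwise-increasing objectives over $\mathcal{X}$, the two sets share the same Pareto boundary and the imprecision is harmless downstream; but your write-up should either prove that corrected statement or say explicitly that the region is being redefined. As it stands, the claim of ``downward-closedness of the region'' is the one false step in an otherwise correct proof.
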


}

 \subsection{Efficiency and fairness}
 

Resource allocation  with multiple users  involves  two important goals: (i) to operate the system at high efficiency, and (ii) to allocate resources in a fair manner. Typically, these two goals are conflicting. Consider the  2-user example of  figure \ref{fig:setX}, where $\mathcal{X}$ is the shown gray area and the marginal user throughputs satisfy $X_1^{\max}>X_2^{\max}$. 
Point A corresponds to the maximum sum throughput--it is the point in $\mathcal{X}$ that maximizes $x_1+x_2$.
An arising issue with point A however, is that user 2 receives zero throughput, which is unfair. 
The fairest point is C, which ensures that the users receive the maximum possible equal throughputs. In this case however, the total system throughput is significantly reduced. 
In practice, engineers desire a tradeoff between the two extremes, A and C. Point B, known as \emph{proportional fairness} provides such a tradeoff. 
Next we formalize the fairness notions of interest.

 \begin{figure}[t!]
\includegraphics[width=6cm]{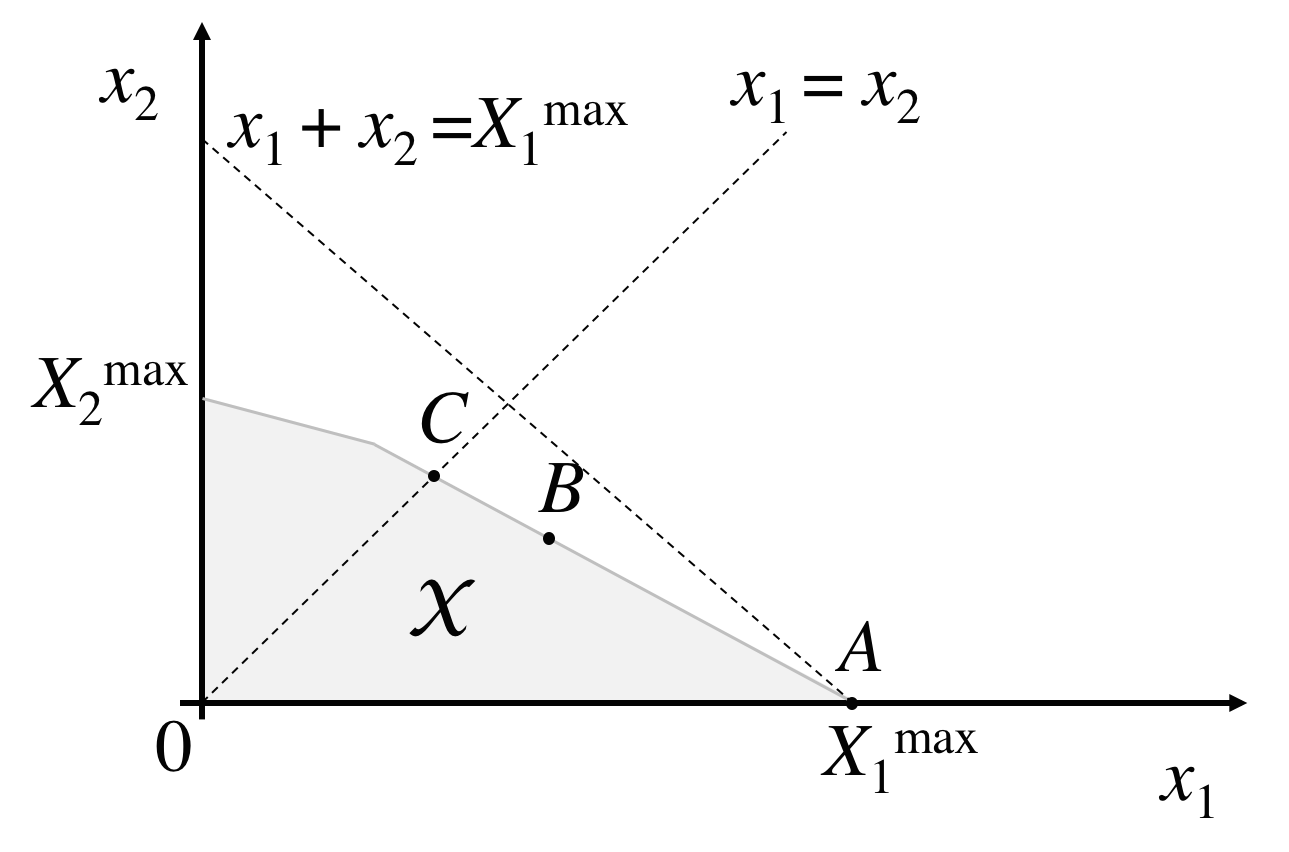}
\centering
\caption{Illustration of feasible set $\mathcal{X}$ for a 2-user downlink.
\vspace{-0in}}
\label{fig:setX}
\end{figure}



\begin{definition}[Fairness objectives]
Let $\mathcal{X}$ be a convex set of feasible throughputs. 
\begin{itemize}
\item A throughput  vector $\boldsymbol x \in \mathcal{X}$ is called \emph{max-sum-throughput} if for any  vector $\boldsymbol y \in \mathcal{X}$ it holds:
\[
\sum_{k=1}^K x_k \geq \sum_{k=1}^K y_k.
\]
\item A throughput vector $\boldsymbol x\in\mathcal{X}$ is called \emph{max-min} fair if for any vector $\boldsymbol y\in\mathcal{X}$ it holds:
\[
y_i>x_i~~\Rightarrow~~ \exists~ j: y_j<x_j\leq x_i.
\]
\item A throughput vector  $\boldsymbol x\in \mathcal{X}$ is called \emph{proportionally} fair if for any vector $\boldsymbol y\in\mathcal{X}$ it holds:
\[
\sum_k \frac{y_k-x_k}{x_k}\leq 0.
\]
\end{itemize}
\end{definition}

Since $\mathcal{X}$ in \eqref{eq:feasible} is a closed set, a max-sum-throughput vector always exists, but it may not be unique. Since $\mathcal{X}$ is convex, it contains a unique max-min fair vector \cite{radunovic}, and because 
the proportionally fair vector is the optimal solution  to maximizing the sum of logarithms (i.e. a strictly convex function) over $\mathcal{X}$, it exists and it is unique.

A connection between fairness and  convex optimization is rigorously established by the  problem of \emph{Network Utility Maximization} (NUM) \cite{Li05}: 
\begin{align}\label{eq:num}
\max_{\boldsymbol x \in \mathcal{X}} \sum_{k=1}^K g_{\alpha}(x_k),
\end{align}
where we have used the $\alpha$-fair function
\[
g_{\alpha}(x)=\left\{\begin{array}{ll}
\frac{x^{1-\alpha}}{1-\alpha},  &  \alpha\in [0,1)\cup (1,\infty)\\
\log x,  & \alpha = 1.
\end{array}\right.
\] 
Problem \eqref{eq:num} is useful since by tuning the value of $\alpha$ we obtain different fair vectors as solutions to the optimization problem: 
\begin{enumerate}
\item choosing $\alpha = 0$ yields max-sum-throughput, 
\item choosing  $\alpha= 1$ yields  proportional fairness \cite{kelly}, 
\item choosing $\alpha\to\infty$ yields max-min fairness \cite{mowalrand}.
\end{enumerate}

Consider the \emph{Gradient-Based Scheduling} (GBS) policy: schedule the user that maximizes $R_k(t)g_{\alpha}'(\overline{x}^{GBS}_k(t))$, where $g_{\alpha}'(x)=x^{-\alpha}$, and recall that $R_k(t)$ is the instantaneous transmission rate of user $k$ and $\overline{x}^{GBS}_k(t)$ is the accumulated throughput of user $k$; clearly GBS~$\in\Pi$.
Let $\boldsymbol x^*(\alpha)$ be a solution of \eqref{eq:num}, prior work \cite{Vijay,stolyar} has shown that $\overline{x}^{GBS}_k(t)\stackrel{\text{a.s.}}{\to}  x^*_k(\alpha),~\forall k$.
Hence, we can use GBS and tune the value of $\alpha$ to operate the system at any desirable point. There is anecdotal evidence that 3G and LTE base stations use GBS with $\alpha\approx 1$.



 \begin{figure}[t!]
	 \centering
	 \begin{overpic}[width=0.33\textwidth]{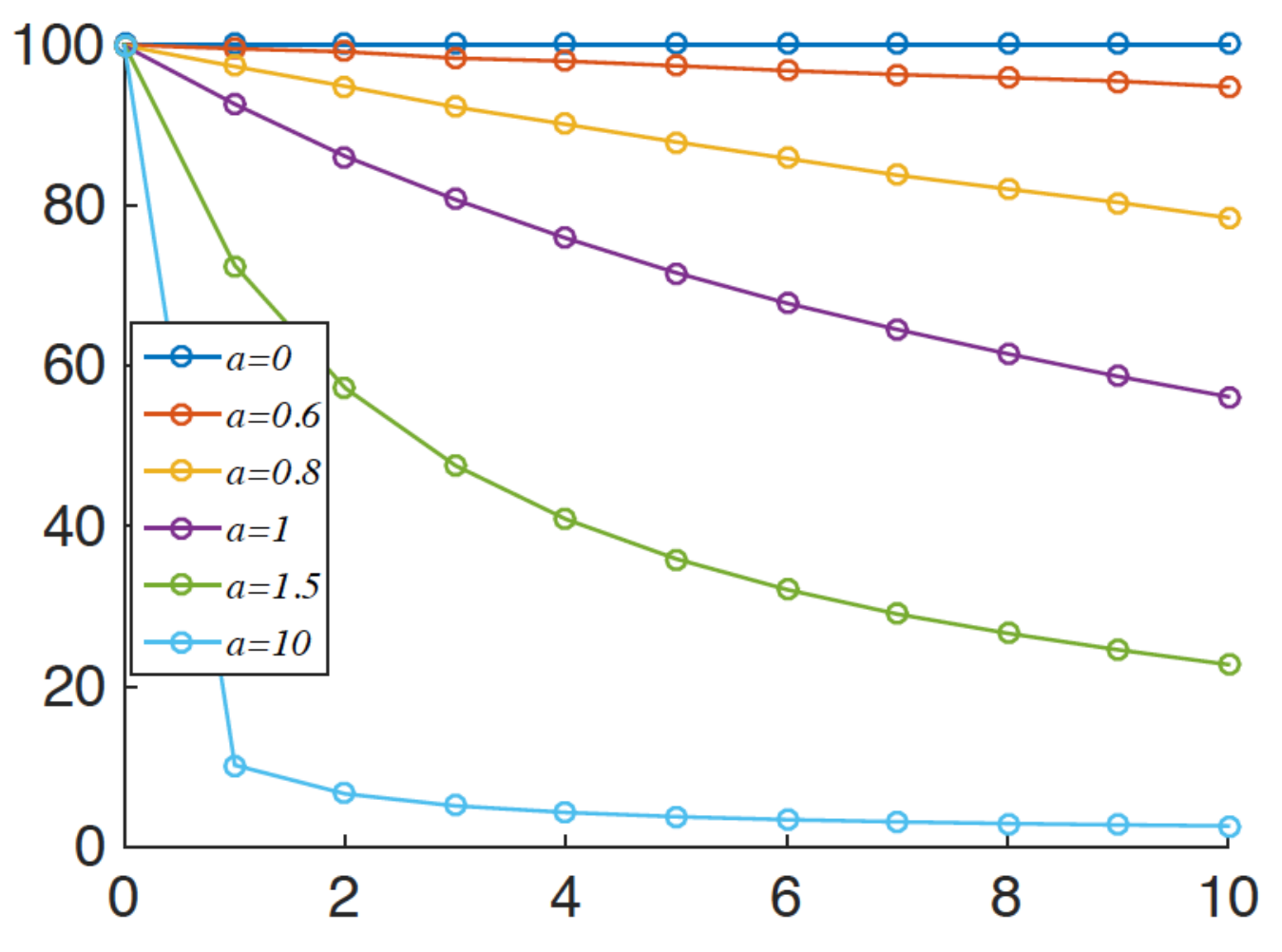}
	 	\put(21,-2.5){\small $\#$ of users with $-20dB$ SNR}
	 	\put(-3,25){\small\rotatebox{90}{$1-PoF$ ($\%$)}}
		\put(66.5,12){\small Max-min fairness}
		\put(61,52){\small Proportional fairness}
		\put(60.9,73.5){\small Max-sum-throughput}
	 \end{overpic}\vspace{0.1in} 
	 \caption{System throughput sensitivity ($1-PoF$) of alpha-fairness to users with poor channel quality. \vspace{-0.04in}
	}
	\label{fig:sensitivity}
\end{figure}

 \section{Optimizing Price of Fairness}\label{sec:sel}
 
Since $\boldsymbol x^*(\alpha)$ is the solution of \eqref{eq:num} for some $\alpha$, it follows that  $\boldsymbol x^*(0)$ denotes a max-sum-throughput vector. We define the price of fairness similar to \cite{pof}:
\begin{equation}
PoF\triangleq\frac{\sum_k x^*_k(0)-\sum_kx^*_k(\alpha)}{\sum_k x^*_k(0)},\label{eq:pof}
\end{equation}
which depicts how expensive it is to offer $\alpha$-fairness in terms of loss of sum throughput. Note that $1-PoF$  shows the fraction of maximum total throughput achieved by the  $\alpha$-fair point.

In wireless systems it is common for some users to have very poor signal reception. 
Such users can have a negative effect in $PoF$, which 
 we showcase with  a simulation example. In a downlink system with Rayleigh fading, we simulate GBS scheduling for $K=10$ users with mean channel rate $1$, while adding progressively 1..10 users with mean channel rate $20dB$ less.  We 
plot $1-PoF$  for different values of $\alpha$ in figure \ref{fig:sensitivity}. 
Proportional fairness experiences a significant  efficiency drop of almost 10$\%$ for 1 weak user, and 40$\%$ for 10. 

Our idea is to economize fairness by excluding weak users from  service. However, 
selecting users is challenging because we do not know \emph{a priori} their \emph{long-term} contributions to the throughput of a fairness-constrained system. 

\subsection{Selective fairness}

As a first step we introduce a novel fairness metric, called \emph{selective fairness}. 
We partition the set of users to two sets, $S\cup S^c=\{1,\dots,K\}$. For  set $S$ we guarantee $\alpha$-fairness, while  the users in the complementary set $S^c$ are not served at all.\footnote{The concept of selective fairness can be generalized to larger user partitions and different fairness objectives per subset.}
Our  goal will be  to  select the set $S$ carefully in order to decrease the price of fairness.

To concretely define selective fairness we need some technical tools. Define the   subspace $\mathcal{X}(S)\subseteq \mathcal{X}$, 
where all non-selected users  $S^c$ must have zero throughputs:
\begin{align*}
\mathcal{X}(S)\triangleq \left\{\boldsymbol x\in \mathcal{X}~\big|~ x_k=0, ~\forall k\in S^c\right\}.
\end{align*}
Also, we  need an $|S|$--dimensional representation of the  vectors in $\mathcal{X}(S)$.  
Let $\boldsymbol e_i$ be a $K$-dimensional column vector of zeros with the exception of element $i$ which is one, e.g. for $K=3$ we have $\boldsymbol e_2=(0,1,0)^T$. Note that $(\boldsymbol e_i)_{i\in \mathcal{K}}$ is a basis of $\mathbbm{R}^K$. Then define the $K\times |S|$ \emph{dimensionality reduction matrix} \vspace{-0.03in}
\[
\boldsymbol D_S\triangleq (\boldsymbol e_k)_{k\in S}.
\]
For the example of 3 users, and $S=\{1,3\}$, we have
\[
\boldsymbol D_{\{1,3\}}= \left(\begin{array}{cc}
1 & 0 \\
0 & 0 \\
0 & 1
\end{array}\right).
\]
If we multiply an element of $\mathcal{X}(S)$ or $\mathcal{X}$ with $\boldsymbol D_S$, we can remove the dimensions that correspond to $S^c$. Last, consider the set with the $|S|$--dimensional representations of $\mathcal{X}(S)$:
\begin{align*}
\Gamma(S)\triangleq \left\{\boldsymbol u~\big|~\boldsymbol u= \boldsymbol x \boldsymbol D_S , \boldsymbol x\in \mathcal{X}(S)\right\}.
\end{align*}

\begin{definition}[Selective fairness]
A vector $\boldsymbol x\in\mathcal{X}$ is called $(S,\alpha)$--\emph{selective fair} if
\begin{enumerate}
\item $\boldsymbol x\in\mathcal{X}(S)$,
\item and $\boldsymbol x \boldsymbol D_S $ is $\alpha$--\emph{fair} in the set $\Gamma(S)$. 
\end{enumerate}
\end{definition}

The definition posits that the selected users in $S$ will be allocated $\alpha$--\emph{fair} throughputs in the subspace $\Gamma(S)$, while the rest users receive zero throughput. 

Fix a subset $S\subseteq \{1,\dots, K\}$, and consider the conditions:
\begin{align}
& x_k=0, ~~\forall k\in S^c,\label{eq:sf1}\\
& \boldsymbol x\in\arg\max_{\boldsymbol u\in \mathcal{X}} \sum_{k\in S} g_{\alpha}(u_k).\label{eq:sf2}
\end{align}
\begin{theorem}
The vector $\boldsymbol x$ is $(S,\alpha)$--\emph{selective fair} if and only if it satisfies \eqref{eq:sf1}-\eqref{eq:sf2}.
\end{theorem}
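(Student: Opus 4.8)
The plan is to peel the two conditions in the definition of $(S,\alpha)$-selective fairness apart and match them to \eqref{eq:sf1}--\eqref{eq:sf2}. Condition~1 of the definition ($\boldsymbol x\in\mathcal{X}(S)$) is by construction identical to \eqref{eq:sf1}, so the entire content of the theorem is to show that, under \eqref{eq:sf1}, condition~2 of the definition is equivalent to the full-space optimality statement \eqref{eq:sf2}. I would treat the two implications separately, since they are of very different difficulty.

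The first tool I would set up is the linear map $\boldsymbol x\mapsto \boldsymbol x\boldsymbol D_S$. Restricted to $\mathcal{X}(S)$ this map is a bijection onto $\Gamma(S)$: surjectivity is the definition of $\Gamma(S)$, and injectivity holds because every $\boldsymbol x\in\mathcal{X}(S)$ has $x_k=0$ for $k\in S^c$ and is thus determined by its $S$-coordinates. Moreover the map is objective-preserving: if $\boldsymbol u=\boldsymbol x\boldsymbol D_S$ then $u_j=x_{k_j}$, hence $\sum_j g_\alpha(u_j)=\sum_{k\in S} g_\alpha(x_k)$. Consequently $\boldsymbol x\boldsymbol D_S$ is $\alpha$-fair in $\Gamma(S)$ (i.e.\ it maximizes $\sum_j g_\alpha(\cdot)$ over $\Gamma(S)$) if and only if $\boldsymbol x$ maximizes $\sum_{k\in S} g_\alpha(\cdot)$ over the restricted set $\mathcal{X}(S)$. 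This rewrites condition~2 as a maximization over $\mathcal{X}(S)$, and the only remaining gap is the difference between maximizing over $\mathcal{X}(S)$ and over the full $\mathcal{X}$.

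One implication is then immediate: if $\boldsymbol x$ satisfies \eqref{eq:sf1}--\eqref{eq:sf2}, it lies in $\mathcal{X}(S)\subseteq\mathcal{X}$ and maximizes the objective over the larger set $\mathcal{X}$, so a fortiori it maximizes it over the subset $\mathcal{X}(S)$, which gives condition~2. The reverse implication is where the work lies and is the step I expect to be the main obstacle: starting from a maximizer over $\mathcal{X}(S)$, I must argue it is also a maximizer over the whole $\mathcal{X}$. For this I would prove the key lemma that the two optimal values coincide,
\[
\max_{\boldsymbol x\in\mathcal{X}} \sum_{k\in S} g_\alpha(x_k)=\max_{\boldsymbol x\in\mathcal{X}(S)} \sum_{k\in S} g_\alpha(x_k),
\]
using the explicit description \eqref{eq:feasible} of $\mathcal{X}$. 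The objective depends only on the $S$-coordinates, whereas any $\boldsymbol x\in\mathcal{X}$ spends scheduling probability $\sum_{k\in S^c}\phi_{k\boldsymbol r}$ on blocked users in each realization $\boldsymbol r$; reallocating this mass to any user of $S$ produces a point of $\mathcal{X}(S)$ whose $S$-throughputs are weakly larger, so the objective does not decrease because $g_\alpha$ is nondecreasing. Hence the supremum over $\mathcal{X}$ is attained inside $\mathcal{X}(S)$ and the two maxima agree.

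Given the lemma, a maximizer $\boldsymbol x$ over $\mathcal{X}(S)$ attains the common optimal value and lies in $\mathcal{X}$, so it is a maximizer over $\mathcal{X}$, which together with \eqref{eq:sf1} is precisely \eqref{eq:sf2}; chaining this with the bijection argument closes the equivalence. The one delicate point to handle cleanly in the reallocation step is the regime $\alpha\geq 1$, where $g_\alpha(x)\to-\infty$ as $x\to 0^+$: there I would keep the reallocation directed to a user of $S$ whose throughput can remain positive (equivalently, argue the value equality on the relevant region where the objective is finite and invoke upper semicontinuity and compactness of $\mathcal{X}$ for attainment). This technical caveat does not affect the monotonicity argument that drives the proof.
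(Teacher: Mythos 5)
Your proof is correct and follows essentially the same route as the paper's: identify condition~1 with \eqref{eq:sf1}, use the dimension-reducing (objective-preserving) bijection $\boldsymbol x\mapsto \boldsymbol x\boldsymbol D_S$ to identify $\alpha$-fairness in $\Gamma(S)$ with maximization over $\mathcal{X}(S)$, and then pass between maximization over $\mathcal{X}(S)$ and over the full set $\mathcal{X}$. In fact you supply in detail the one step the paper compresses into an annotation over its first $\Leftrightarrow$, namely the monotone-reallocation argument showing $\max_{\boldsymbol u\in\mathcal{X}}\sum_{k\in S}g_\alpha(u_k)=\max_{\boldsymbol u\in\mathcal{X}(S)}\sum_{k\in S}g_\alpha(u_k)$, so that a maximizer over $\mathcal{X}(S)$ is also a maximizer over $\mathcal{X}$.
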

\begin{proof}
Definition-1) is equivalent to  \eqref{eq:sf1}. We also establish the equivalence of definition-2) to \eqref{eq:sf2}:
\begin{align*}
& \boldsymbol x\in\arg\max_{\boldsymbol u\in \mathcal{X}} \sum_{k\in S} g_{\alpha}(u_k) \stackrel{\eqref{eq:sf1} \text{ or } \boldsymbol x\in\mathcal{X}(S)}{\Leftrightarrow}\\
& \boldsymbol x\in\arg\max_{\boldsymbol u\in \mathcal{X}(S)} \sum_{k\in S} g_{\alpha}(u_k) \Leftrightarrow \\
& \boldsymbol x \boldsymbol D_S \in\arg\max_{\boldsymbol u\in \Gamma(S)} \sum_{i=1}^{|S|} g_{\alpha}(u_i),
\end{align*}
{where the last follows by a  dimension permutation.}
%
\end{proof}

An implication of the theorem is that the $(S,\alpha)$--selective fair point is the limit point of GBS policy if we preclude users in $S^c$ from scheduling.


\subsection{$PoF$ minimization with   selective fairness}

In figure \ref{fig:sensitivity} we saw that when we block users with poor channel quality, the $PoF$ decreases.
It is therefore natural to ask the question \emph{if we were allowed to block all but $s_{\min}$ users in order to decrease $PoF$, which users would we block?} Next, we  pursue a $(S,\alpha)$-selective fair point that minimizes  $PoF$ (and thus maximizes system efficiency) subject to  serving at least  $s_{\min}$ users in a fair manner. 

\begin{align}
T(S^*)=&\max_{S\subseteq \{1,\dots,K\}} \sum_k x_k \label{eq:selfair1}\\
\text{s.t. }~& \boldsymbol x \in \arg\max_{\boldsymbol u\in \mathcal{X}} \sum_{k\in S} g_{\alpha}(u_k) \label{eq:c1}\\
& x_k=0, ~~k\in S^c\label{eq:c2}\\
& |S|\geq s_{\min},\label{eq:c3}
\end{align}
where, \eqref{eq:c1}-\eqref{eq:c2} ensure that the solution vector is $(S,\alpha)$-selective fair, 
the constraint \eqref{eq:c3} ensures that at least $s_{\min}$ users are served,
and the objective \eqref{eq:selfair1} aims  to achieve the maximum sum throughput denoted as $T(S^*)$ (from \eqref{eq:pof} this is equivalent to minimizing $PoF$). Problem  \eqref{eq:selfair1}-\eqref{eq:c3} is an admission control problem with fairness constraints.


We give a pictorial  example of $PoF$ minimization.
Consider a 3-user wireless downlink with feasible throughputs shown in figure \ref{fig:optselect}. The system operates with proportional fairness ($\alpha=1$) and must serve at least $s_{\min}=2$ users. Figure \ref{fig:optselect} shows all  possible selective fair points, where the points $(\{k\},1), k=1,2,3,\emptyset$ (indicated as dots with white interior) are infeasible due to constraint \eqref{eq:c3}. Among the feasible $(S,1)$-selective fair points, the optimization selects the point with the maximum total throughput, which in this case is $(\{1,2\},1)$. 
 \begin{figure}[t!]
\includegraphics[width=5.2cm]{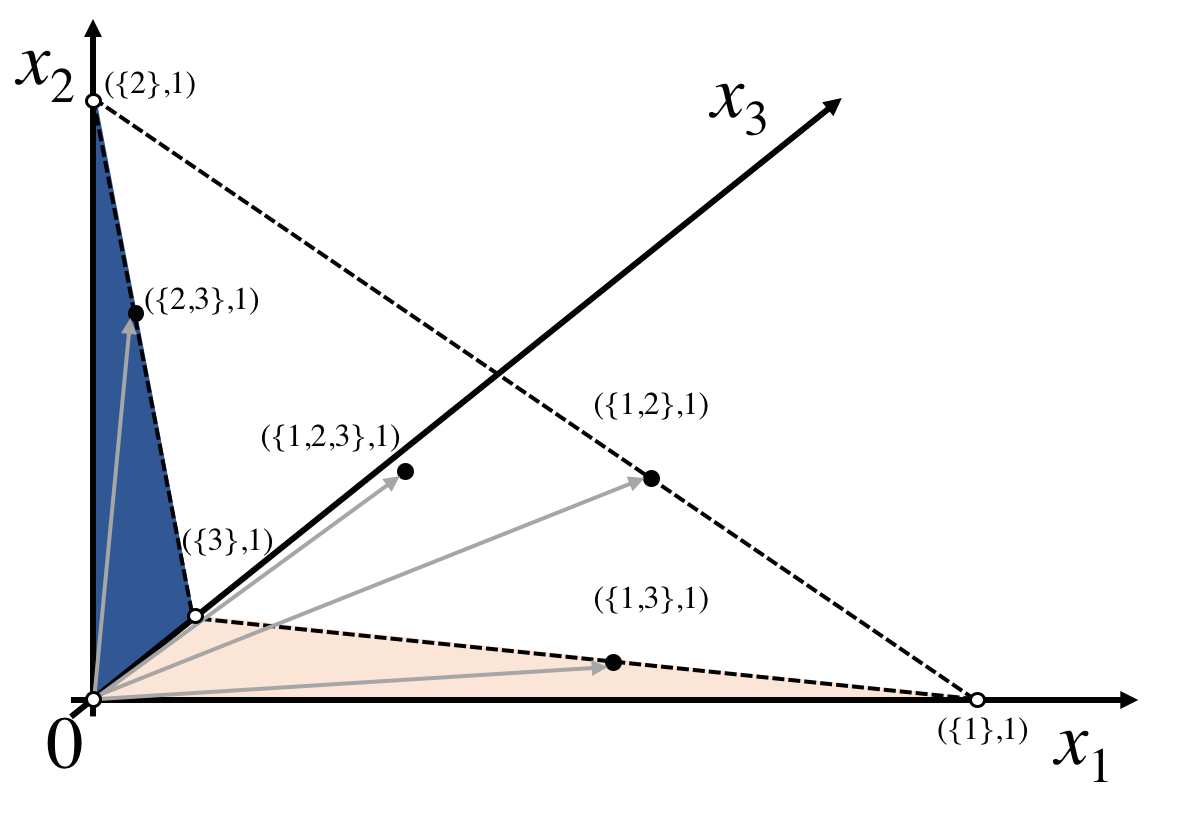}
\centering
\caption{3-user feasible throughputs and the eight possible $(S,1)$-selective fair points. 
}
\label{fig:optselect}
\end{figure}

\begin{corollary} 
Suppose a system is operated with a policy that solves \eqref{eq:selfair1}, and denote the total achieved throughput by $T_{\mathcal{K}}$ when serving users $\mathcal{K}$. If $|\mathcal{K}|\geq s_{\min}$, then:
\[
T_{\mathcal{K}}\leq T_{\mathcal{K}'},~~\forall \mathcal{K}\subseteq\mathcal{K}'.
\]
\end{corollary}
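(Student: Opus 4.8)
Reading the statement, I take $T_{\mathcal{K}}$ to be the optimal value of the admission-control problem \eqref{eq:selfair1}--\eqref{eq:c3} when the universe of schedulable users is restricted to the pool $\mathcal{K}$. Writing $T(S)\triangleq\sum_{k\in S}x_k$ for the (well-defined) total throughput of the $(S,\alpha)$--selective fair operating point of a selection $S$, this reads $T_{\mathcal{K}}=\max\{\,T(S): S\subseteq\mathcal{K},\ |S|\ge s_{\min}\,\}$. This is the only reading under which the statement can hold as written: the per-set quantity $T(S)$ is \emph{not} monotone in $S$, since enlarging the served set can lower its total throughput when fairness forces the scheduler to spend resources on weak users. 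Hence the corollary cannot be about a fixed served set; it must concern the optimized selection, which is how I read $T_{\mathcal{K}}$.

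With this in hand, the first step is to record the single structural fact that drives everything: by the definitions of $\mathcal{X}(S)$ and $\Gamma(S)$, the value $T(S)$ depends only on the selection $S$ and the global feasible set $\mathcal{X}$, and not on which larger pool $S$ happens to be drawn from. Thus the objective is one fixed set-function, and only the collection of admissible selections changes as the pool grows. Next I would set $\mathcal{F}(\mathcal{K})\triangleq\{\,S\subseteq\mathcal{K}: |S|\ge s_{\min}\,\}$ and note two things: (i) the hypothesis $|\mathcal{K}|\ge s_{\min}$ gives $\mathcal{K}\in\mathcal{F}(\mathcal{K})$, so $\mathcal{F}(\mathcal{K})\neq\emptyset$ and $T_{\mathcal{K}}$ is well defined, and likewise $|\mathcal{K}'|\ge|\mathcal{K}|\ge s_{\min}$ makes $T_{\mathcal{K}'}$ well defined; and (ii) if $\mathcal{K}\subseteq\mathcal{K}'$, then every $S\subseteq\mathcal{K}$ with $|S|\ge s_{\min}$ is also a subset of $\mathcal{K}'$ with $|S|\ge s_{\min}$, so $\mathcal{F}(\mathcal{K})\subseteq\mathcal{F}(\mathcal{K}')$.

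The conclusion then follows from monotonicity of a maximum over nested domains. Let $S^{\star}$ attain the optimum for the smaller pool, so $S^{\star}\in\mathcal{F}(\mathcal{K})$ and $T(S^{\star})=T_{\mathcal{K}}$. By (ii), $S^{\star}\in\mathcal{F}(\mathcal{K}')$, so $S^{\star}$ is an admissible selection for the larger pool and, by the pool-invariance of the objective, still realizes the value $T(S^{\star})=T_{\mathcal{K}}$ there. Since $T_{\mathcal{K}'}$ is the maximum of $T(\cdot)$ over $\mathcal{F}(\mathcal{K}')$, we obtain $T_{\mathcal{K}'}\ge T(S^{\star})=T_{\mathcal{K}}$, which is exactly the claim.

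The argument is essentially bookkeeping, so I do not expect a genuine obstacle. The one point worth pinning down is the invariance of $T(S)$ under enlarging the pool, so that we are truly maximizing a single objective over nested feasible sets; I would verify this directly against the definitions of selective fairness, $\mathcal{X}(S)$ and $\Gamma(S)$, checking that no implicit dependence on $S^{c}$ (and hence on the pool) leaks into $T(S)$. Everything else is the standard ``larger feasible set yields a no-smaller optimum'' principle.
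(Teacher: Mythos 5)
Your proof is correct. The paper offers no explicit proof of this corollary --- it is stated as an immediate consequence of the formulation \eqref{eq:selfair1}--\eqref{eq:c3} --- and your argument (the admissible selection families are nested, $\mathcal{F}(\mathcal{K})\subseteq\mathcal{F}(\mathcal{K}')$, while the objective $T(S)$ is pool-invariant since $\Gamma(S)$ is determined by the joint channel statistics of $S$ alone, so the maximum over the larger family cannot decrease) is exactly the reasoning the paper leaves implicit. Your opening disambiguation is also the right call: the claim can only concern the optimized selection, not a fixed served set, because $T(S)$ itself is not monotone in $S$ under fairness --- which is precisely the paper's motivation for blocking in the first place.
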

Hence, the optimization  \eqref{eq:selfair1} is useful because it ensures that the system performance does not drop even when adding  users with low average SNRs.


However $PoF$ minimization with selective fairness \eqref{eq:selfair1} is in general a combinatorial Mixed Integer Convex Program:  to determine the solution we  potentially need to inspect exponential to $K$ user subsets, and for each subset solve a difficult NUM problem. The NUM problem is difficult due to the possibly very large number of fading states, and explicit solutions for Rayleigh fading are only known for the case of max-min fairness \cite{Combes10}. 
Nevertheless, we present next a condition which is sufficient to break the combinatorial structure, and simplifies the solution of $PoF$ minimization. 

\subsection{Subspace monotonicity}
Consider a permutation of user indices $\sigma(.)$ and the arising $|S|$--dimensional subspace $\Gamma^{\sigma}(S)$, which is the same as $\Gamma(S)$ but with dimensions permuted by $\sigma$.
We can now compare the subspaces $\Gamma(S_1)$, $\Gamma(S_2)$ for two user sets with same cardinality $|S_1|=|S_2|=|S|$. 
We say that the subspace of $S_1$ dominates that of $S_2$ if \emph{there exists} permutation $\sigma_2$ such that 
\[
\Gamma(S_1)\supseteq \Gamma^{\sigma_2}(S_2).
\]

Figure \ref{fig:comparison} showcases a comparison of subspaces with $|S|=2$ on the example of figure \ref{fig:optselect}.

 \begin{figure}[t!]
\includegraphics[width=6cm]{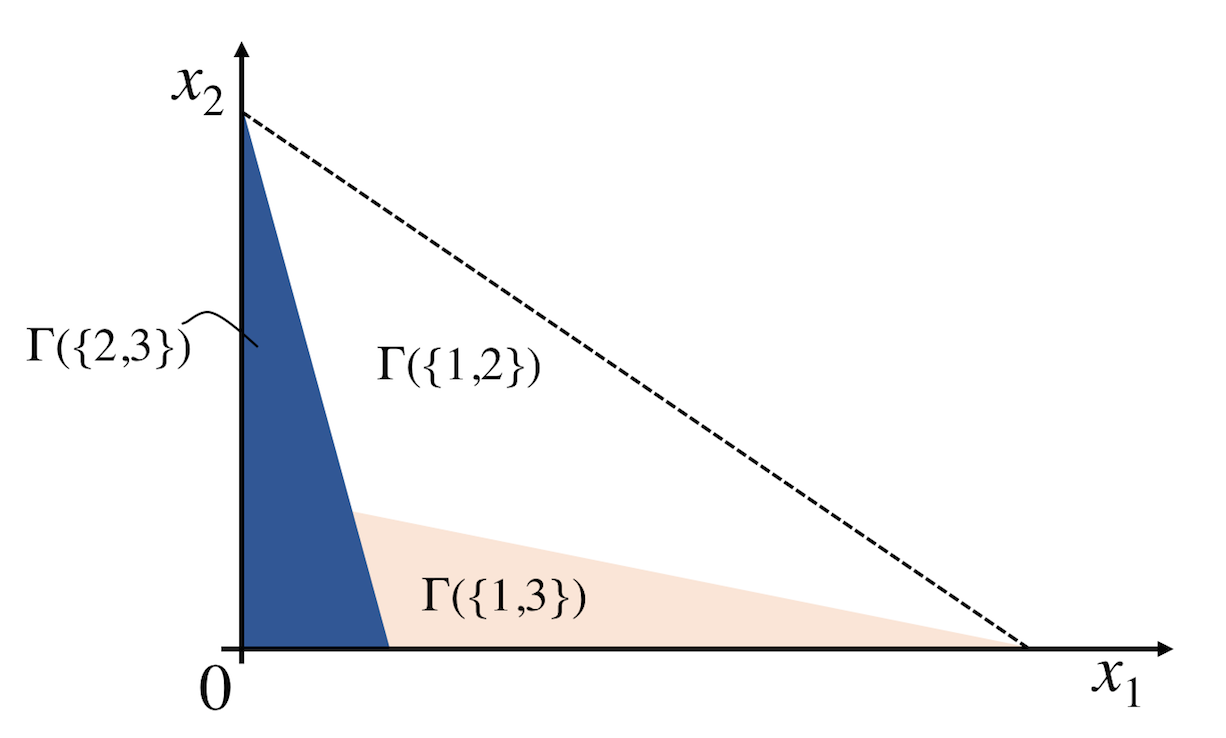}
\centering
\caption{Subspace comparison for the example of figure \ref{fig:optselect} and $|S|=2$.}
\label{fig:comparison}
\end{figure}

If a subspace dominates another, then it also contains a better selective fair vector;
let $T(S)$ be the total throughput of the $(S,\alpha)$--fair vector under a feasible $S$.
Suppose there exists permutation ${\sigma_2}$ such that  $\Gamma(S_1) \supseteq \Gamma^{\sigma_2}(S_2)$. Then $T(S_1)\geq T(S_2)$. Next we generalize this observation.

%
%


\begin{definition}[Subspace monotonicity]
For feasible throughputs $\mathcal{X}$ with dimension $K$, we say that the set  $\mathcal{X}$ satisfies the subspace monotonicity  if there exists an ordering of users $(\sigma(1),\sigma(2),\dots,\sigma(K))$ and permutations $\sigma_S$, such that: 
\[
\Gamma(\{\sigma(1),\dots,\sigma(|S|)\}) \supseteq \Gamma^{\sigma_S}(S),~~\forall S\subseteq \mathcal{K}.
\]
\end{definition}
Subspace monotonicity expresses the condition that if the  users are ordered according to $\sigma(.)$, the subspace of first $|S|$ users, denoted by $\Gamma(\{\sigma(1),\dots,\sigma(|S|)\})$, dominates the subspace of any other subset with same cardinality.

\begin{corollary}\label{cor:2}
Suppose that the set $\mathcal{X}$  satisfies the subspace monotonicity. Then an optimal solution of \eqref{eq:selfair1} is of the form $(x_{\sigma(1)}^*,x_{\sigma(2)}^*,\dots, x_{\sigma(j)}^*,0,\dots,0)$, where $s_{\min}\leq j \leq K$.
\end{corollary}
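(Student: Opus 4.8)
The plan is to reduce the combinatorial search over all $2^K$ subsets to a linear search over the $K-s_{\min}+1$ prefix sets of the monotone ordering $\sigma$. The starting observation is that for any fixed feasible $S$, the value of the objective \eqref{eq:selfair1} evaluated at the $(S,\alpha)$-selective fair point is exactly $T(S)$, the total throughput of that point; this is immediate from constraints \eqref{eq:c1}--\eqref{eq:c2} together with the definition of $T(S)$. Hence problem \eqref{eq:selfair1}--\eqref{eq:c3} is equivalent to $\max\{T(S)\,:\,|S|\ge s_{\min}\}$, and it suffices to locate a maximizer of this among prefix sets.

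Next I would stratify the feasible subsets by their cardinality $m$, for $m=s_{\min},\dots,K$, and write $P_m\triangleq\{\sigma(1),\dots,\sigma(m)\}$ for the length-$m$ prefix of the monotone ordering. The key step is to show that within each stratum the prefix $P_m$ is a maximizer. By subspace monotonicity there is a permutation $\sigma_S$ with $\Gamma(P_m)\supseteq \Gamma^{\sigma_S}(S)$ for every $S$ of cardinality $m$. Invoking the domination-implies-better-vector fact stated in the paragraph preceding the corollary (with $S_1=P_m$ and $S_2=S$) then gives $T(P_m)\ge T(S)$ for all $|S|=m$, so $\max_{|S|=m}T(S)=T(P_m)$. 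Taking the outer maximum over $m$, the overall optimum becomes $\max_{s_{\min}\le m\le K} T(P_m)$, attained at some prefix $P_j$ with $s_{\min}\le j\le K$. The corresponding selective-fair vector assigns throughputs to $\sigma(1),\dots,\sigma(j)$ and zero to the remaining users, which is precisely the claimed form $(x^*_{\sigma(1)},\dots,x^*_{\sigma(j)},0,\dots,0)$.

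The one technical point I would be careful about is the permutation-invariance of $T(\cdot)$, which underpins applying the domination fact across different coordinate labelings: because both the $\alpha$-fair objective $\sum_i g_\alpha(u_i)$ and the total throughput $\sum_i u_i$ are symmetric in the coordinates, relabeling $\Gamma^{\sigma_S}(S)$ back to $\Gamma(S)$ sends the $\alpha$-fair maximizer to the $\alpha$-fair maximizer and preserves its sum, so $T(S)$ is well defined independently of the chosen $\sigma_S$. Granting this, the main obstacle is structural rather than analytical, namely justifying that optimizing over prefixes of $\sigma$ loses no optimality compared to optimizing over arbitrary subsets, and this is exactly what the cardinality-by-cardinality argument above delivers. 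In particular, no convex program beyond the $|S|$-dimensional NUM for the $K-s_{\min}+1$ prefix sets needs to be solved.
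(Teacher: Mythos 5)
Your proposal is correct and takes essentially the same route as the paper: the paper offers no separate proof of Corollary \ref{cor:2}, deriving it directly from the observation stated just before it (that $\Gamma(S_1)\supseteq\Gamma^{\sigma_2}(S_2)$ implies $T(S_1)\geq T(S_2)$) together with the definition of subspace monotonicity, which is precisely your cardinality-by-cardinality reduction to the prefix sets $\{\sigma(1),\dots,\sigma(m)\}$. Your additional remark on the permutation-invariance of $T(\cdot)$ merely makes explicit a point the paper leaves implicit, so the two arguments coincide in substance.
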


Therefore,  if the subspace monotonicity is satisfied,  then  problem \eqref{eq:selfair1} can be solved by a polynomial number of calls to an oracle that solves the NUM problem. Hence, we are motivated to ask: \emph{when is subspace monotonicity satisfied?}

For the interested user, subspace monotonicity is trivially satisfied for a resource allocation problem constrained on a simplex ($\sum_k x_k\leq 1$), where users can be ordered with marginal utilities. 
Next, we will show that it also applies to our problem under a specific condition for the channels. 
Recall that $R_k(t)$ is the user $k$ instantaneous channel rate, which is random  independently distributed across users and time. In the remaining of the paper we make the following assumption.
\begin{assumption}[Stochastic dominance]\label{as:stochDominance}
The channels are stochastically dominated: 
there exists a permutation $\sigma(.)$ of user indices, such that  $R_{\sigma(1)}(t)\geq_{\text{st}}\dots \geq_{\text{st}} R_{\sigma(K)}(t)$, where $R_{\sigma(i)}(t)\geq_{\text{st}}R_{\sigma(j)}(t)$ means 
\[
P\left(R_{\sigma(i)}(t)>x\right) \geq P\left(R_{\sigma(j)}(t)>x\right),~~ \forall x, t.
\]
\end{assumption}
Assumption \ref{as:stochDominance} is mild and holds for  many practical cases of interest, such as identically distributed fading channels with different means.

\begin{lemma}\label{lem:orderingLemma}
Consider a $K$-user wireless downlink  where Assumption \ref{as:stochDominance} holds. Then the feasible throughputs $\mathcal{X}$ satisfy the subspace monotonicity.
 \end{lemma}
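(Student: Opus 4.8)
The plan is to show that the user ordering already supplied by Assumption~\ref{as:stochDominance} is exactly the one that witnesses subspace monotonicity. After relabelling I may assume the dominating permutation $\sigma$ is the identity, so $R_1 \geq_{\text{st}} \cdots \geq_{\text{st}} R_K$, and the goal becomes: for every $S$ with $|S|=s$ produce a permutation $\sigma_S$ with $\Gamma(\{1,\dots,s\}) \supseteq \Gamma^{\sigma_S}(S)$.

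First I would reduce everything to a single \emph{upgrade swap}. Call $S'=(S\setminus\{j\})\cup\{i\}$ adjacent to $S$ when $i\notin S$, $j\in S$ and $i<j$ (hence $R_i \geq_{\text{st}} R_j$), with common part $T=S\setminus\{j\}$. The key claim is that for adjacent sets there is a permutation $\sigma$ exchanging coordinate $j$ for coordinate $i$ and fixing $T$, such that $\Gamma(S')\supseteq\Gamma^{\sigma}(S)$. Granting this, a bubble-sort argument closes the lemma: whenever $S\neq\{1,\dots,s\}$ there exist $i\le s$ with $i\notin S$ and $j>s$ with $j\in S$, so $i<j$; replacing $j$ by $i$ increases $|S\cap\{1,\dots,s\}|$ while, by the claim, only enlarging the suitably permuted subspace. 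Composing the finitely many permutations and using transitivity of $\supseteq$ yields $\sigma_S$.

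The heart of the proof is the swap claim, where I would use both the independence of the channels and the coupling characterisation of stochastic dominance. Fix a point of $\Gamma(S)$ realised (without loss of generality, by independence, using only the rates of $T\cup\{j\}$) by scheduling probabilities $\phi$, so $x_k=\mathbb{E}[\phi_k(\boldsymbol R)R_k]$ for $k\in T$ and $x_j=\mathbb{E}[\phi_j(\boldsymbol R)R_j]$, with $\sum_{k\in S}\phi_k\le 1$. Since $R_i\geq_{\text{st}}R_j$, the quantile coupling lets me generate, from the observed rate $R_i$ and a private independent uniform $W$, a simulated rate $R_j^{\mathrm{sim}}=g(R_i,W)$ with the same law as $R_j$, independent of $(R_k)_{k\in T}$ (because $R_i$ is), and satisfying $R_j^{\mathrm{sim}}\le R_i$ almost surely; as $\mathcal R$ is finite, $g$ is just a stochastic matrix. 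I then transplant the policy to $S'$: the base station observes $R_i$ and $(R_k)_{k\in T}$, forms $R_j^{\mathrm{sim}}$ internally, and plays $\psi_k=\phi_k((R_k)_{k\in T},R_j^{\mathrm{sim}})$ for $k\in T$ and $\psi_i=\phi_j((R_k)_{k\in T},R_j^{\mathrm{sim}})$. Since $((R_k)_{k\in T},R_j^{\mathrm{sim}})$ has precisely the law driving $\phi$, each $k\in T$ keeps throughput $x_k$, while user $i$ earns $x_i':=\mathbb{E}[\phi_j\,R_i]\ge\mathbb{E}[\phi_j\,R_j^{\mathrm{sim}}]=x_j$ from $R_i\ge R_j^{\mathrm{sim}}$ and $\phi_j\ge 0$; feasibility is inherited because $\psi_i+\sum_{k\in T}\psi_k=\sum_{k\in S}\phi_k\le 1$.

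Finally I would pass from ``$\ge x_j$'' to the exact value. Because idling is feasible ($\sum_k I_k^{\pi}(t)\le 1$, equivalently one may adjoin a zero-rate dummy user), I can scale $\psi_i$ down by serving $i$ only a $\theta$-fraction of the time and idling otherwise; this sweeps user $i$'s throughput continuously over $[0,x_i']$ while leaving $T$ untouched, so some $\theta$ hits exactly $x_j$. Thus the $\sigma$-image of the chosen point of $\Gamma(S)$ lies in $\Gamma(S')$, establishing the swap claim and hence the lemma. The step I expect to be most delicate is the coupling transplant: one must verify that $R_j^{\mathrm{sim}}$ can be realised as a measurable function of the \emph{observed} rate $R_i$ together with private randomness, that the resulting $\psi$ is a legitimate feasible policy in the genuinely independent-channel model, and that substituting $R_j^{\mathrm{sim}}$ for the real independent $R_j$ preserves the joint law seen by the transplanted policy --- which is exactly where independence and Assumption~\ref{as:stochDominance} jointly enter.
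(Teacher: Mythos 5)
Your proof is correct, and while it shares the paper's outer skeleton---reduce subspace monotonicity to a single ``replace user $j$ by a stochastically better user $i$'' inclusion and then compose finitely many such swaps (your bubble-sort step is actually more explicit than the paper's one-line appeal to extending the pairwise comparison to all sets of equal cardinality)---the core of your swap argument is genuinely different from the paper's. The paper conditions on the channel states $\boldsymbol{s}$ of the common users $T$ (legitimate by independence and the product form of $p_{\boldsymbol r}$), so that only the swapped user's channel stays random; it then re-realizes user $j$'s throughput via an LP that minimizes the time fraction given to $j$, notes that the LP optimum is monotone across the ordered channel states, and applies the summation-by-parts inequality \eqref{eq:posdom} to conclude that the stochastically better user attains the same rate without consuming more time, leaving the common users' rates intact. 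You instead realize the dominance $R_i\geq_{\text{st}}R_j$ as a monotone (quantile/Strassen) coupling implemented as a stochastic kernel $R_j^{\mathrm{sim}}=g(R_i,W)$ driven by the \emph{observed} rate plus private randomness, transplant the $S$-policy verbatim with $R_j^{\mathrm{sim}}$ in place of $R_j$, and throttle user $i$ by idling so as to hit the target coordinate exactly. What each approach buys: yours avoids both the conditioning over the (exponentially many) states of $T$ and the LP/rearrangement argument, makes completely explicit where independence enters (it is precisely what makes $R_j^{\mathrm{sim}}$ independent of $(R_k)_{k\in T}$ and lets the induced randomized policy be averaged back into the form \eqref{eq:feasible}), and extends essentially unchanged beyond finite rate alphabets; the paper's argument is more elementary, requiring no coupling machinery, and stays entirely inside the finite LP description of $\mathcal{X}$. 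One further point in your favor: your throttling step cleanly resolves the exact-value issue ($\geq x_j$ versus $=x_j$) that the paper's proof glosses over; like the paper's own proof (which writes $\sum_k \phi_{k,l}\leq 1$), it needs idling to be feasible, which is licensed by the policy-level definition of feasibility even though \eqref{eq:feasible} is stated with $\sum_k\phi_{k\boldsymbol{r}}=1$.
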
 
The proofs are in the Appendix.

\subsection{Minimizing $PoF$ with online experts}

We propose  an efficient online scheme to minimize $PoF$ \eqref{eq:selfair1}. Building on the subspace monotonicity property, our scheme is shown to achieve optimal performance using a small number of GBS experts. 
First we define the notion of \emph{Gradient-Based Scheduler (GBS) expert} for set $S\subseteq\mathcal{K}$. GBS($S$) uses the real system observations $\boldsymbol R(t)$ to simulate the $(S,\alpha)$-selective fair performance.

\noindent \rule[0.05in]{3.5in}{0.01in}

\vspace{-0.08in}
\noindent \textbf{GBS($S$) expert:}

\vspace{-0.06in}
\noindent \rule[0.05in]{3.5in}{0.01in}

\noindent\textbf{Initialize:} $\overline{x}^{S}_k(0)=0,~\forall k\in S$.  

\noindent\textbf{Iterate:} At $t$, observe $\boldsymbol R(t)$ and do:
\begin{enumerate}

\item \emph{Scheduling:} Choose user $k^*$ randomly from the set
\[
\arg\max_{k\in S}\frac{R_k(t)}{(\overline{x}^{S}_k(t))^{\alpha}}
\]

\item \emph{Throughput update:}
\[
\overline{x}^{S}_k(t+1)=
\frac{t}{t+1}\overline{x}^{S}_k(t) + \frac{1}{t+1}R_{k}(t)\mathbbm{1}_{(k=k^*)}
\]

\end{enumerate}

\vspace{-0.02in}
\noindent \rule[0.05in]{3.5in}{0.01in}

Observe that the GBS($S$) expert converges to $T(S)$ and hence ensures $\alpha$-fairness for the user set $S$. To minimize $PoF$, we next present the Selective GBS policy that precludes users from scheduling according to the best GBS($S$) expert.

\noindent \rule[0.05in]{3.5in}{0.01in}

\vspace{-0.08in}
\noindent \textbf{Selective GBS:}

\vspace{-0.06in}
\noindent \rule[0.05in]{3.5in}{0.01in}

\noindent\textbf{Input:} $s_{\min}$, permutation of user indices $\sigma(.)$.

\noindent\textbf{Initialize:} $\overline{x}^{{sel}}_k(0)=0,~\forall k=1,\dots,K$.  Consider the stronger user set per cardinality:
\begin{equation}\label{eq:setIndexing}
S_i = \{\sigma(1), \sigma(2),\dots,\sigma(i)\}, \forall i\in\{1,2,\dots,K\}
\end{equation}
Initialize all GBS($S$) experts  for $S=S_{s_{\min}},\dots,S_{K}$.

\noindent\textbf{Iterate:} At  $t$, observe $\boldsymbol R(t)$ and do:
\begin{enumerate}
\item \emph{Expert update:} Throughput update $\overline{x}^{S}_k(t)$ for each of the experts GBS($S_{s_{\min}}$), GBS($S_{s_{\min} + 1}$), $\dots$, GBS($S_{K} $).

\item \emph{Expert selection:} Choose the expert with highest total accumulated throughput 
\[
S^*(t)\in \arg\max_{\{S_i\}_{i\geq s_{\min}}} \sum_{k\in S_i}\overline{x}^{S_i}_k(t)
\]

\item \emph{Scheduling:} Choose user $k^*$ randomly from the set
\begin{equation}\label{eq:sel_sel}
\arg\max_{k\in S^*(t)}\frac{R_k(t)}{(\overline{x}^{sel}_k(t))^{\alpha}}
\end{equation}

\item \emph{Throughput update:}
\[
\overline{x}^{sel}_k(t+1) = \frac{t}{t+1}\overline{x}^{sel}_k(t) + \frac{1}{t+1}R_{k}(t)\mathbbm{1}_{(k=k^*)}
\]

\end{enumerate}

\vspace{-0.02in}
\noindent \rule[0.05in]{3.5in}{0.01in}



\begin{theorem}\label{th:selectiveGBSperformance}
Assume stochastically dominated channels and permutation $\sigma(.)$ such that corollary \ref{cor:2} holds. Then the Selective GBS converges almost surely to $T(S^*)$:
\[
\mathbb{P}\left\{ \lim_{t\rightarrow\infty}\sum_{k}\overline{x}^{sel}_k(t) = T(S^*)\right\} = 1.
\]
\end{theorem}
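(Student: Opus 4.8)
The plan is to decouple the dynamics into an autonomous \emph{expert} layer and a driven \emph{actual-schedule} layer, and to exploit that, under the hypothesis, the best expert attains exactly $T(S^*)$. First I would record the structural fact that the experts run independently of the rest of the algorithm: each $GBS(S_i)$ updates $\overline{x}^{S_i}$ using only the observed $\boldsymbol R(t)$ and its own state (step~1 of Selective GBS), never consulting $\overline{x}^{sel}$ or the selection step. Thus each expert is an ordinary $|S_i|$-user gradient scheduler, and by the a.s.\ convergence of GBS cited in the paper (applied to the subsystem restricted to $S_i$, whose NUM limit is precisely the $(S_i,\alpha)$-selective fair point) we get $\sum_{k\in S_i}\overline{x}^{S_i}_k(t)\to T(S_i)$ almost surely, simultaneously for the finitely many indices $i\in\{s_{\min},\dots,K\}$.

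Next I would identify and lock in the optimal expert. Because $\sigma$ is chosen so that Corollary~\ref{cor:2} applies, the optimum of \eqref{eq:selfair1} is attained by a nested set, i.e.\ $T(S^*)=\max_{i\ge s_{\min}}T(S_i)$. Writing $\mathcal{I}^*=\{i:T(S_i)=T(S^*)\}$ and $\delta=T(S^*)-\max_{i\notin\mathcal{I}^*}T(S_i)>0$, the previous step gives a.s.\ a finite random time $t_0$ after which every optimal expert's accumulated throughput lies above $T(S^*)-\delta/3$ and every non-optimal one below $T(S^*)-2\delta/3$. Hence for all $t\ge t_0$ the expert-selection step chooses $S^*(t)\in\{S_i:i\in\mathcal{I}^*\}$, so a suboptimal expert is never selected again.

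It then remains to transfer convergence from the selected expert to the actual accumulators $\overline{x}^{sel}$, which are what the theorem is about. When $\mathcal{I}^*$ is a singleton $\{j^*\}$ the selection is eventually frozen at $S^*(t)\equiv S_{j^*}$, so for $t\ge t_0$ the scheduling rule \eqref{eq:sel_sel} together with its throughput update is exactly $GBS(S_{j^*})$ driven by $\overline{x}^{sel}$. Users in $S_{j^*}^c$ are never served after $t_0$, so $\overline{x}^{sel}_k(t)=(t_0/t)\,\overline{x}^{sel}_k(t_0)\to 0$ for those $k$, while the active coordinates obey the GBS stochastic recursion with step size $1/(t+1)$; since its mean ODE $\dot u=\bar\mu(u;S_{j^*})-u$ has the $(S_{j^*},\alpha)$-selective fair vector as unique globally attracting equilibrium, the GBS convergence result (whose ODE-method proof is robust to the arbitrary bounded state reached at $t_0$) yields $\sum_k\overline{x}^{sel}_k(t)\to T(S_{j^*})=T(S^*)$ a.s.

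The hard part is the degenerate case $|\mathcal{I}^*|>1$: several experts share the value $T(S^*)$ but possibly different selective-fair vectors, and the vanishing fluctuations of the converging accumulators can make $S^*(t)$ switch among the optimal sets infinitely often, so the actual policy never settles on a single GBS. I would close this gap with a stochastic-approximation argument: after $t_0$ the process $\overline{x}^{sel}(t)$ is a GBS recursion whose scheduling set is confined to $\{S_i\}_{i\in\mathcal{I}^*}$, and its limit set is contained in the internally chain-recurrent set of the associated differential inclusion $\dot u\in\{\bar\mu(u;S_i)-u:i\in\mathcal{I}^*\}$; one then checks that the total-throughput functional equals $T(S^*)$ on that set, since each admissible drift keeps the system at the common optimal level. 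As ties are non-generic, an alternative is to perturb the tie-breaking (or $\alpha$) to restore uniqueness and invoke the singleton case. This coupling-under-switching step is the only place where more than a direct citation of the known GBS convergence is required.
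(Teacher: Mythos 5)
Your proposal is correct, and its core route coincides with the paper's own proof: (i) the experts run autonomously and, by the cited a.s.\ convergence of GBS, $\sum_{k\in S_i}\overline{x}^{S_i}_k(t)\to T(S_i)$ simultaneously for the finitely many nested sets; (ii) since Corollary \ref{cor:2} places the optimum among the nested sets, a throughput-gap argument freezes the expert selection at the optimal set after a finite time; (iii) once frozen, the scheduling rule \eqref{eq:sel_sel} is an ordinary GBS on the fixed set, so $\overline{x}^{sel}$ converges to the $(S^*,\alpha)$-selective fair point and the total throughput to $T(S^*)$. The paper phrases (ii)--(iii) quantitatively, choosing $\delta_0$ so that the intervals $\mathcal{B}_i(\delta_0)$ are disjoint and finite times $M_0(\epsilon)$, $M_1(\delta)$ so that $\mathbb{P}\left\{\mathcal{C}^{\delta}(t)\right\}\to 1$, whereas you use a.s.\ finite random times; these are equivalent. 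The one substantive difference is the tie case: the paper's proof opens by \emph{assuming} $T(S_i)\neq T(S_j)$ for $i\neq j$, i.e., it excludes degeneracy outright, while you identify $|\mathcal{I}^*|>1$ as the hard case and sketch a differential-inclusion (or tie-breaking perturbation) argument to cover it, so your treatment is strictly more complete on this point. Conversely, your formulation quietly sidesteps a blemish in the paper's: the bound $\mathbb{P}\left\{S^*(\tau)=S^*,\forall\tau\geq t\right\}\geq\prod_{i}\mathbb{P}\left\{\mathcal{A}_i^{\delta_0}(t)\right\}$ reads as an independence claim, which is false since all experts observe the same $\boldsymbol R(t)$; it can be repaired by a union bound (each $1-\mathbb{P}\left\{\mathcal{A}_i^{\delta_0}(t)\right\}<1-\sqrt[K]{1-\epsilon}\leq\epsilon/K$), but your "a.s.\ finite $t_0$" phrasing never needs it.
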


Notably from \eqref{eq:sel_sel}, the best expert is used to decide the user set restriction $S^*(t)$, but the scheduling decision of Selective GBS is ultimately made according to $\overline{x}^{sel}_k(t)$, which is different from what the experts see, $\overline{x}^{S}_k(t)$.

The Selective GBS is an online policy that does not require knowledge of fading statistics and can adapt to system changes. Additionally, it requires only $O(K)$ number of GBS($S$) experts, and hence a) selecting the best expert at each slot, and b) storing expert data in memory, is efficient. 

As input, Selective GBS requires the minimum number of accepted users $s_{\min}$ and the permutation $\sigma(.)$ that orders the users in decreasing average SNR. The latter can be kept updated using frequent measurements. Determining $s_{\min}$ is however complicated in practice; we should avoid blocking certain subscribers forever. The next section proposes our main result that deals with this issue: a novel stochastic framework which allows to \emph{determine the optimal set of blocked users on each scheduling realization while providing blocking frequency guarantees}.

\section{Stochastic selective fairness}

In wireless systems the actual number of users and their average channel quality varies from base station to base station and from time to time, hence it is impossible to operate selective fairness with a predetermined $s_{\min}$. 
Instead, we require a scheme that can adapt the number of selected users in an online fashion in order to a) improve system efficiency, and b) ensure Quality of Service. 


\subsection{Novel SLA}

We follow  a stochastic approach; we propose a novel Service Level Agreement (SLA), such that an arbitrary user is guaranteed to be selected with probability $1-\epsilon$, where $\epsilon$ is tunable. As opposed to the  snapshot scheduling of section \ref{sec:sel}, we consider a sequence of  scheduling problem instances $\boldsymbol z_1,\boldsymbol z_2,\dots$, where each $\boldsymbol z_n$ is a realization of the random variable $\boldsymbol{Z}$ describing the spatial distribution of the users, that is the set of active users $\mathcal{K}(\boldsymbol{Z})$ and their average SNRs due to slow fading, denoted by $\overline{SNR}_k$ for user $k\in\mathcal{K}(\boldsymbol{Z})$.
Then let $y_k(\boldsymbol{Z}) = 1$ indicate the event that user $k$ is selected at a random realization $\boldsymbol{Z}$, we require the probabilistic constraint:
\[
\mathbb{P}_{\boldsymbol{Z}}\left(y_k(\boldsymbol{Z}) = 1 | k\in\boldsymbol{Z} \right)\geq 1 - \epsilon, \forall k.
\]
The users are considered statistically equivalent, hence the constraint intuitively  captures the behavior of an ergodic subscriber, i.e., one who buys a service with the associated SLA, and then uses the service over a long time horizon each time from possibly different locations. In this context, the constraint implies that such a subscriber will be rarely blocked.
\vspace{0.2in}
 \begin{figure}[h!]
\includegraphics[width=0.5\textwidth]{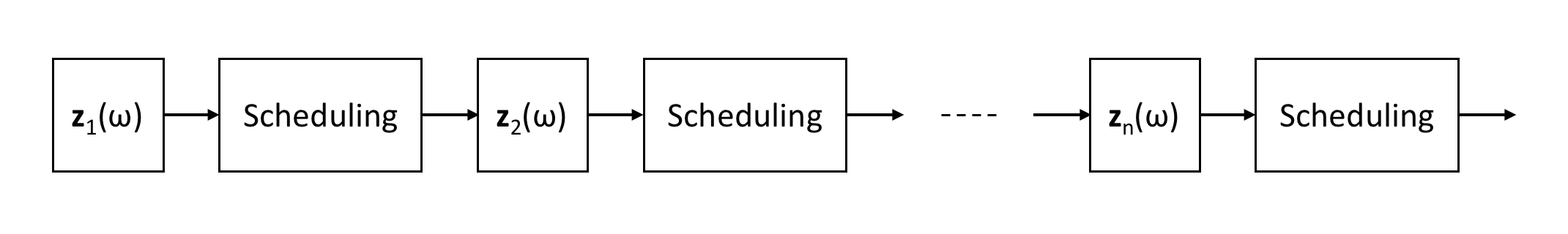}
\centering
\caption{Scheduling over multiple realizations.\vspace{0in}}
\label{fig:realizations}
\end{figure}

\subsection{Problem formulation}

In this section we formalize a $PoF$ minimization over multiple scheduling realizations. Let $\mathcal{T}(n) = \{t_{n},...,t_{n+1}-1\}$ be the set of time slots spanned by the $n-$th realization of the spatial distribution, $\boldsymbol{Z}_n$. Then, given the realization of the spatial distribution in $\mathcal{T}(n)$, $\boldsymbol{z}_n$, we solve a selective-fair scheduling problem where we need to decide which users to admit in the system for these slots; let this decision be denoted by $\boldsymbol{y}(\boldsymbol{z}_n)$. 
We need the following assumption for time scale separation: 

\begin{assumption}[Time scale separation]\label{as:timeScaleSeparation}
	The changes in spatial realizations occur in a slower time scale than the fast fading. In particular, the duration of a realization $\mathcal{T}(n)$ is long enough for the empirical throughput averages to converge to the $\alpha$-fair rates, that is 
	\[
	\left|\frac{1}{|\mathcal{T}(n)|}\sum_{t=t_n}^{t_{n+1}-1}\mu_k(t) - T_k^*(\boldsymbol{z}_n,\boldsymbol{y}(\boldsymbol{z}_n))\right| \leq \delta 
	,\]
	for some small $\delta>0$, where $T_k^*(\boldsymbol{z}_n,\boldsymbol{y}(\boldsymbol{z}_n))$ is the optimal selective $\alpha$-fair rate vector for spatial realization $\boldsymbol{z}_n$, given the admission decision $\boldsymbol{y}(\boldsymbol{z}_n)$.
\end{assumption} 
The above assumption implies that user arrivals/departures and slow fading fluctuations occur slower than the convergence of the scheduler, which in practice takes a few seconds. 
 

Let the admission policy $\boldsymbol{y}(.)$ be a function from spatial realizations to deciding if a user will be admitted or not. 
Our general problem then is to find the admission policy  to solve the following stochastic optimization:
\begin{align}\label{eq:stoprob}
\max_{\boldsymbol{y}}\quad & \mathbb{E}_{\boldsymbol{Z}}\left\{\sum_{k}T_k^*(\boldsymbol{Z}, \boldsymbol{y}(\boldsymbol{Z}))\right\}\\ \label{eq:stoconst}
\text{s.t.}\quad & \mathbb{P}_{\boldsymbol{Z}}\left(y_k(\boldsymbol{Z}) = 1 | k\in\boldsymbol{Z} \right)\geq 1 - \epsilon, ~\forall k,  
\end{align} 
where the constraint \eqref{eq:stoconst} ensures the satisfaction of the SLA of each subscriber, and the maximal value of \eqref{eq:stoprob} represents the best way the system can use the available admission budget to minimize the $PoF$. Note that this minimization is very complicated since it  jointly considers 
i) the actual number of active users at each realization, 
ii) the random user locations and corresponding average SNRs, and
iii) 
the cost of fairness. 

We proceed to reformulate the problem as an optimization of time-averages. First, it will be useful to transform the per-user SLA constraints into an equivalent constraint for the total number of admitted users:

\begin{lemma}\label{lem:equivalenceLemma}
	Assume the spatial distribution of all users is the same. Then, \eqref{eq:stoconst} is equivalent to 
	\begin{equation*} \label{eq:equivalence} 
	\mathbb{E}_{\boldsymbol{Z}}\left\{\sum_{k\in\mathcal{K}(\boldsymbol{Z})}y_{k}(\boldsymbol{Z})\right\} \geq (1-\epsilon)\mathbb{E}_{\boldsymbol{Z}}\left\{|\mathcal{K}(\boldsymbol{Z})|\right\}
	\end{equation*}
\end{lemma}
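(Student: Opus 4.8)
The plan is to prove the two directions separately, noting up front that only the forward implication (per-user $\Rightarrow$ aggregate) holds verbatim for a fixed policy, while the reverse holds in the optimization sense. That is all that is needed, since the lemma is used only to rewrite the constraint set of problem \eqref{eq:stoprob}. The engine of the reverse direction is the exchangeability built into the hypothesis that every user has the same spatial distribution.

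For the forward direction I would simply rewrite each constraint in \eqref{eq:stoconst} by multiplying through by $\mathbb{P}_{\boldsymbol Z}(k\in\boldsymbol Z)$ to get
\[
\mathbb{E}_{\boldsymbol Z}\!\left[y_k(\boldsymbol Z)\,\mathbbm{1}_{\{k\in\boldsymbol Z\}}\right]\geq (1-\epsilon)\,\mathbb{P}_{\boldsymbol Z}(k\in\boldsymbol Z),\qquad\forall k,
\]
sum over all potential users $k$, and identify $\sum_k \mathbb{E}_{\boldsymbol Z}[y_k\mathbbm{1}_{\{k\in\boldsymbol Z\}}]=\mathbb{E}_{\boldsymbol Z}[\sum_{k\in\mathcal K(\boldsymbol Z)}y_k(\boldsymbol Z)]$ and $\sum_k\mathbb{P}_{\boldsymbol Z}(k\in\boldsymbol Z)=\mathbb{E}_{\boldsymbol Z}[|\mathcal K(\boldsymbol Z)|]$ by linearity of expectation. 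This produces the aggregate inequality and uses no symmetry.

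For the reverse direction I would use a symmetrization argument. Since all users share the same spatial distribution, $\boldsymbol Z$ is exchangeable under relabeling of users, so $p\triangleq\mathbb{P}_{\boldsymbol Z}(k\in\boldsymbol Z)$ is independent of $k$. Given any (randomized) admission policy $\boldsymbol y$ feasible for the aggregate constraint, I would form its symmetrization $\tilde{\boldsymbol y}$ by drawing a uniform random permutation $\pi$ of the user labels, relabeling the realization, applying $\boldsymbol y$, and relabeling the decision back. Three facts then follow: (i) because the objective $\sum_k T_k^*$ and the law of $\boldsymbol Z$ are permutation-invariant, $\tilde{\boldsymbol y}$ attains the same value of \eqref{eq:stoprob} as $\boldsymbol y$; (ii) $\mathbb{E}_{\boldsymbol Z}[\sum_k\tilde y_k]=\mathbb{E}_{\boldsymbol Z}[\sum_k y_k]$, so the aggregate constraint is preserved; and (iii) the quantity $q\triangleq\mathbb{E}_{\boldsymbol Z}[\tilde y_k\mathbbm{1}_{\{k\in\boldsymbol Z\}}]$ is now independent of $k$. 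Writing $K_{\mathrm{tot}}$ for the number of potential users, a short computation gives
\[
\mathbb{P}_{\boldsymbol Z}\!\left(\tilde y_k=1\mid k\in\boldsymbol Z\right)=\frac{q}{p}=\frac{K_{\mathrm{tot}}\,q}{K_{\mathrm{tot}}\,p}=\frac{\mathbb{E}_{\boldsymbol Z}[\sum_k\tilde y_k]}{\mathbb{E}_{\boldsymbol Z}[|\mathcal K(\boldsymbol Z)|]},
\]
i.e. each per-user success probability equals the aggregate admitted fraction. Hence $\tilde{\boldsymbol y}$ satisfies \eqref{eq:stoconst} if and only if it satisfies the aggregate constraint, and since it matches both the objective and the aggregate constraint value of $\boldsymbol y$, the two constrained versions of \eqref{eq:stoprob} share the same optimal value and admit a common (symmetric) optimizer.

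The main obstacle is conceptual rather than computational: making precise the sense in which the two constraints are \emph{equivalent}. They are not equivalent pointwise in the policy, since the aggregate constraint is a strict relaxation; the argument must therefore pass through the optimization problem and use that both the objective and the admission cost are symmetric functions of the (exchangeable) users, which is exactly where the hypothesis that all users share the same spatial distribution is consumed. Secondary technical care is needed to allow randomized admission decisions, so that the symmetrized policy is well defined and still integer-valued in each sample, and to ensure $\mathbb{P}_{\boldsymbol Z}(k\in\boldsymbol Z)>0$ so that the conditional probabilities are defined.
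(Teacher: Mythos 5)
Your proposal is correct, and on the crucial direction it is more careful than the paper's own argument. The paper's proof is much shorter: it asserts that, because all users have the same spatial distribution, \emph{every} admissible policy $\boldsymbol y$ automatically satisfies $\mathbb{P}_{\boldsymbol Z}\left(y_k(\boldsymbol Z)=1 \mid k\in\boldsymbol Z\right)=p(\boldsymbol y)$ for a common value $p(\boldsymbol y)$ independent of $k$; conditioning on $\left(k\in\boldsymbol Z\right)$ then gives $\mathbb{E}_{\boldsymbol Z}\left\{\sum_{k\in\mathcal K(\boldsymbol Z)}y_k(\boldsymbol Z)\right\}=p(\boldsymbol y)\,\mathbb{E}_{\boldsymbol Z}\left\{|\mathcal K(\boldsymbol Z)|\right\}$, so both constraints collapse to $p(\boldsymbol y)\geq 1-\epsilon$ and the equivalence is pointwise in the policy. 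That opening assertion is exactly the issue you flag: it fails for label-dependent policies (e.g.\ ``always admit user 1, never admit user 2''), so the paper is implicitly restricting attention to label-symmetric policies without saying so. Your proof handles this honestly: the forward direction by summing the per-user constraints (no symmetry needed, and computationally the same identity the paper uses), and the reverse direction by symmetrizing an arbitrary feasible policy with a uniform random relabeling, which preserves the objective of \eqref{eq:stoprob} and the aggregate constraint while equalizing the per-user conditional probabilities, so the two constrained problems share the same optimal value and a common symmetric optimizer. What the paper's route buys is brevity; what yours buys is a precise statement of the sense in which the two constraints are ``equivalent'' (at the level of the optimization, not per policy) together with a proof valid for all policies --- which is all that is needed where the lemma is invoked. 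One small caveat: your symmetrization uses joint exchangeability of $\boldsymbol Z$ under relabeling, which is formally stronger than identical marginal spatial distributions; but the paper's own equal-probability claim needs the same strengthening, and it holds in the model considered (users activated and placed i.i.d.), so this is not a defect relative to the paper.
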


Then, we use the \emph{law of large numbers} to express the objective function as a time average over scheduling realizations:
\begin{align*}
&\mathbb{E}_{\boldsymbol{Z}}\left\{\sum_{k}T_k^{*}(\boldsymbol{Z}, \boldsymbol{y}(\boldsymbol{Z}))\right\} =\\
 &\hspace{0.4in}\lim_{N\rightarrow\infty}\frac{1}{N} \mathbb{E}\left\{\sum_{n=0}^{N-1}\sum_{k\in \boldsymbol{z}_n}T_k^*(\boldsymbol{z}_n, \boldsymbol{y}(\boldsymbol{z}_n))\right\},
\end{align*}
where the expectation is taken with respect to possible random control (if the control is deterministic it can be eliminated).
Finally, constraint \eqref{eq:stoconst} can be rewritten as a time average of indicator functions:
\begin{equation}\label{eq:constraintTimeAvg}
\lim_{N\rightarrow\infty}\frac{1}{N}\mathbb{E}\left\{\sum_{n=0}^{N-1}\mathbbm{1}_{(k\in \boldsymbol{Z}_n)}y^n_k(\boldsymbol{Z}_n)\right\} \geq (1-\epsilon) \mathbb{E}_{\boldsymbol{Z}}\left\{|\mathcal{K}(\boldsymbol{Z})|\right\}
.\end{equation}


The problem to solve now has become:
\begin{align}
\max_{\{\boldsymbol y^n\}_n,\eqref{eq:constraintTimeAvg}} \lim_{N\rightarrow\infty}\frac{1}{N} \mathbb{E}\left\{\sum_{n=0}^{N-1}\sum_{k\in \boldsymbol{z}_n}T_k^*(\boldsymbol{z}_n, \boldsymbol{y}^n(\boldsymbol{z}_n))\right\}.  \label{eq:problem}
\end{align}


\section{Stochastic Selective Fair Policies}

The form of \eqref{eq:problem}  motivates a stochastic optimization approach \cite{georgiadis06}, where we can use a virtual queue to ensure the satisfaction of the time-average constraint. 

Define a \emph{virtual queue} $Q(n)$, which evolves at the same  time scale as the spatial process as follows: 
\begin{equation}\label{eq:vQueue}
Q(n+1) = \left[Q(n) + A(n) - D(n)\right]^+
,\end{equation}
where $A(n)$ is defined as 
\begin{equation} \label{eq:virtualArrivals}
A(n) = \begin{cases}
		|\mathcal{K}(\boldsymbol{Z}_n)|, \text{w.p.} \quad 1-\epsilon \\
		0, \text{w.p.} \quad \epsilon,
\end{cases}
\end{equation}

and $
D(n) = D(\boldsymbol{Z}_n, \boldsymbol{y}^n(\boldsymbol{Z}_n)) = \sum_{k\in\mathcal{K}(\boldsymbol{Z}_n)}y^n_k(\boldsymbol{Z}_n)
$
is the number of users admitted at the $n-$th spatial realization. Note that $A(n)$ is a  random variable  with mean $(1-\epsilon)\mathbb{E}_{\boldsymbol{Z}}\left\{|\mathcal{K}(\boldsymbol{Z})|\right\}$. The idea then is that, if the queue is stable, the mean of its \emph{service} $D(n)$ will be larger than the one of its its \emph{arrivals} $A(n)$, and the equivalent SLA constraint \eqref{eq:constraintTimeAvg} will be met.
Therefore, our policy will strive to stabilize $Q(t)$.
 We observe that the virtual queue can be seen as a counter to track/learn the Lagrange multiplier corresponding to the SLA constraint.

\subsection{Known selective fair throughputs}
We  first deal with the case where given a realization $\boldsymbol z_n$, the  corresponding selective-fair sum throughputs $\sum_kT_k^*( \boldsymbol{z}_n, \boldsymbol{y})$ are known for each $\boldsymbol y$. 

\noindent \rule[0.05in]{3.5in}{0.01in}

\vspace{-0.08in}
\noindent \textbf{Drift Plus Penalty (DPP):}

\vspace{-0.06in}
\noindent \rule[0.05in]{3.5in}{0.01in}

\noindent \textbf{Initialize:} Fix parameter $V>0$. 

\noindent \textbf{Iterate over scheduling realizations:} At the beginning of $t_n$: 
\begin{enumerate}
	\item \emph{User selection:}
\begin{equation}\label{eq:idppMaximization}
		\boldsymbol{y}^n = \arg\max_{\boldsymbol{y}}\left[ \sum_kT_k^*( \boldsymbol{Z}_n, \boldsymbol{y}) + \frac{Q(n)}{V}\sum_{k\in\mathcal{K}(\boldsymbol{Z}_n)}y_k \right].
	\end{equation}
	\item \emph{Virtual queue update:} Set $D(n) = \sum_ky_k^n$, draw a random variable $A(n)$ and update the queue as in \eqref{eq:vQueue}.  
\end{enumerate}

\vspace{-0.06in}
\noindent \rule[0.05in]{3.5in}{0.01in}


\begin{theorem}\label{th:optimalityIDPP}
	The DPP satisfies the SLAs, and yields sum throughput within $O(1/V)$ of the maximum in \eqref{eq:problem}. 
\end{theorem}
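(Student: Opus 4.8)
The plan is to run the standard Lyapunov drift-plus-penalty argument of \cite{georgiadis06} on the virtual queue \eqref{eq:vQueue}, exploiting the fact that the maximization \eqref{eq:idppMaximization} performed by DPP is \emph{exactly} the per-realization minimization of a drift-plus-penalty bound. I would define the quadratic Lyapunov function $L(n)=\tfrac12 Q(n)^2$ and the one-step conditional drift $\Delta(n)=\mathbb{E}[L(n+1)-L(n)\mid Q(n)]$. First I would square the recursion \eqref{eq:vQueue}, use $([q]^+)^2\le q^2$, and the boundedness of the arrivals $A(n)$ and departures $D(n)=\sum_k y_k^n$ (both are at most the finite maximal number of active users) to obtain $\Delta(n)\le B + Q(n)\,\mathbb{E}[A(n)-D(n)\mid Q(n)]$ for a constant $B$. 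Subtracting $V$ times the conditional expected reward $\sum_k T_k^*(\boldsymbol Z_n,\boldsymbol y^n)$ yields the drift-plus-penalty inequality whose control-dependent part is precisely $-[\,V\sum_k T_k^* + Q(n)\sum_k y_k\,]$, since the $Q(n)A(n)$ term is independent of the admission decision.

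The crux is the \emph{opportunistic minimization} step: because DPP selects $\boldsymbol y^n$ to maximize $\sum_k T_k^*(\boldsymbol Z_n,\boldsymbol y)+\tfrac{Q(n)}{V}\sum_k y_k$, equivalently $V\sum_k T_k^*+Q(n)\sum_k y_k$, it minimizes the control-dependent part of the bound over \emph{all} admissible decisions for the observed realization $\boldsymbol Z_n$. Hence for any alternative $\boldsymbol Z$-only randomized policy $\tilde{\boldsymbol y}$ I may upper-bound the DPP drift-plus-penalty by substituting $\tilde{\boldsymbol y}$ on the right-hand side.

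Next I would invoke two benchmark stationary randomized policies, whose existence follows from the i.i.d. nature of $\boldsymbol Z_n$ and standard results for such problems. First, a reward-optimal policy achieving the optimum $T^\star$ of \eqref{eq:problem} while meeting \eqref{eq:constraintTimeAvg}, i.e. with expected departures at least the mean arrival $(1-\epsilon)\mathbb{E}\{|\mathcal{K}(\boldsymbol Z)|\}$; substituting it makes the $Q(n)$-term nonpositive and gives $\Delta(n)-V\mathbb{E}[\text{reward}_{DPP}]\le B-VT^\star$. Summing over $n=0,\dots,N-1$, telescoping $L$, using $L(0)=0$, $L(N)\ge0$, and dividing by $VN$ yields $\frac1N\sum_n\mathbb{E}[\text{reward}_{DPP}(n)]\ge T^\star - B/V$, the claimed $O(1/V)$ optimality. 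Second, a \emph{Slater} policy with strictly positive slack $\eta>0$ in the admission constraint, which exists because blocking an $\epsilon$-fraction is feasible by the construction of $A(n)$ in \eqref{eq:virtualArrivals}; plugging it in and using boundedness of the reward gives $\Delta(n)\le B'-\eta Q(n)$, a negative drift whenever $Q(n)$ is large. Telescoping this bound shows $\frac1N\sum_n\mathbb{E}[Q(n)]$ is $O(V)$, hence $\mathbb{E}[Q(N)]/N\to0$ (mean-rate stability). Via the telescoped recursion $Q(N)\ge Q(0)+\sum_n(A(n)-D(n))$, mean-rate stability forces the time-average departures to dominate the time-average arrivals, which is exactly \eqref{eq:constraintTimeAvg}; Lemma \ref{lem:equivalenceLemma} then translates this into the per-user SLA.

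I expect the main obstacle to be rigorously establishing the two benchmark policies, in particular that the supremum in \eqref{eq:problem} is attained (or approached within vanishing error) by a stationary $\boldsymbol Z$-only randomized admission rule meeting the constraint, together with verification of the Slater slack. A secondary subtlety is that each per-realization reward $T_k^*$ is itself only an asymptotic average of the underlying selective GBS scheduler; in this subsection these throughputs are taken as known and exact, so the approximation error $\delta$ of Assumption \ref{as:timeScaleSeparation} does not enter, though in general it would contribute an additive $O(\delta)$ term to the performance gap. The remaining steps are the routine telescoping and boundedness estimates.
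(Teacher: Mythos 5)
Your proposal is correct, and its core is the same drift-plus-penalty comparison the paper uses: bound the conditional Lyapunov drift minus $V$ times the reward, observe that the DPP rule \eqref{eq:idppMaximization} minimizes the control-dependent part of that bound, substitute a stationary randomized benchmark, and telescope. The one genuine structural difference is in how queue stability (and hence the SLA) is obtained. The paper uses a \emph{single} benchmark, the optimal stationary randomized policy $\boldsymbol{y}^*$ (existence cited from Neely's framework, exactly the technical point you flagged): after substituting $\boldsymbol{y}^*$, the $Q(n)$-term is nonpositive because $\boldsymbol{y}^*$ meets the admission constraint, and the resulting inequality $\mathbb{E}\{Q^2(n+1)-Q^2(n)\}\le B+V(\text{bounded reward difference})$ already telescopes to $\mathbb{E}\{Q^2(N)\}=O(N)$, i.e.\ $\mathbb{E}\{Q(N)\}=O(\sqrt{N})$, which is mean-rate stability; both conclusions then drop out of the same inequality. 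You instead invoke a \emph{second}, Slater-type benchmark with slack $\eta>0$ (correctly justified: admitting everyone gives slack $\eta=\epsilon\,\mathbb{E}_{\boldsymbol{Z}}\{|\mathcal{K}(\boldsymbol{Z})|\}>0$), yielding the negative-drift bound $\Delta(n)\le B'-\eta Q(n)$. This costs one extra construction but buys a strictly stronger conclusion: a uniform $O(V)$ bound on the time-averaged backlog $\frac{1}{N}\sum_n\mathbb{E}[Q(n)]$, hence an explicit rate at which the SLA constraint violation vanishes, rather than bare mean-rate stability. Your closing chain (mean-rate stability $\Rightarrow$ time-average departures dominate arrivals $\Rightarrow$ \eqref{eq:constraintTimeAvg} $\Rightarrow$ per-user SLA via Lemma \ref{lem:equivalenceLemma}) matches the paper's intent, and your remark that $T_k^*$ is treated as exactly known in this subsection, so the $\delta$ of Assumption \ref{as:timeScaleSeparation} does not enter, is accurate.
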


\subsection{Unknown selective fair throughputs}

In practice determining $\sum_kT_k^*( \boldsymbol{z}_n, \boldsymbol{y})$ for a given set of users $\boldsymbol{y}$ knowing their channel statistics is very challenging, with the exception  of proportional and max-min fairness, where approximations exist \cite{Combes10}, and the case of max throughput where the solution is to always admit all users. Therefore,  we cannot always apply the  DPP  policy. 
In this section we propose a combination of Selective GBS (which requires $s_{\min}$) and DPP (which requires $\sum_kT_k^*( \boldsymbol{z}_n, \boldsymbol{y})$), to design an policy which progressively learns  the best user set and thus  does not need the information of neither $s_{\min}$, nor $\sum_kT_k^*( \boldsymbol{z}_n, \boldsymbol{y})$. 
The only information needed is the ordering of the users according to their SNRs, which is found by measurements. 


\noindent \rule[0.05in]{3.5in}{0.01in}

\vspace{-0.08in}
\noindent\textbf{Online Selective Fair (OSF) Scheduler:}

\vspace{-0.06in}
\noindent \rule[0.05in]{3.5in}{0.01in}

\noindent\textbf{Initialize:} Fix parameter $V>0$, virtual queue service $S^*(0)=S$. 

\noindent\textbf{Iterate $t_n$ (over scheduling realizations):} 
\begin{enumerate}
	\item \emph{Update virtual queue:} Generate the random variable $A(n)$, as per \eqref{eq:virtualArrivals}, set $D(n) = |S^*(t_n)|$, i.e. the cardinality of the subset chosen in the last slot of last realization, and update the queue: 	$Q(n+1) = \left[Q(n) - D(n) + A(n)\right]^+$.
	\item \emph{User ordering:} Consider current user realization $\boldsymbol z_n$, and $K_n=|\mathcal{K}(\boldsymbol z_n)|$. Then permute users with $\sigma(.)$ such that $\overline{SNR}_{\sigma(1)}\geq \overline{SNR}_{\sigma(2)}\geq...\geq \overline{SNR}_{\sigma(K_n)}$.
	
	\item \emph{Initialize next realization:} Reset $t=0$, construct sets ${S}_i$ as per \eqref{eq:setIndexing}, 
	set $\overline{x}_k(0) = 0, \forall k$ and $\overline{x}^{{S}_i}_k(0)=0, \forall k\in{S}_i$ for each expert $i$, 	and 	run the next realization using the following policy:
	 \end{enumerate} 
	
\noindent \textbf{Iterate $t$:}
\begin{enumerate} 
	\item \emph{Expert update:} Execute one step of each expert $\text{GBS}({S}_i), i\in\{1,2,...,K_n\}$ using $\boldsymbol R(t)$. Let $\overline{\boldsymbol{x}}^{{S}_i}(t+1)$ be the vector of accumulated throughputs of expert $i$.
	
	\item \emph{Expert selection:} Choose the expert that maximizes:
	\[
	{S}^*(t) = \arg\max_{\{{S}_i\}_i}\left[\sum_{k\in{S}_i}\overline{{x}}^{{S}_i}_k(t+1) +|{S}_i|\frac{Q(n)}{V}\right]
	\]
	\item \emph{Scheduling:} Choose user $k^*(t)$ randomly from the set:
	\[
	\arg\max_{k\in{S}^*(t)}\frac{R_{k}(t)}{(\overline{x}_k(t))^{\alpha}}
	\]
	\item \emph{Throughput update:}
	\[
		\overline{x}_k(t+1) = \frac{t}{t+1}\overline{x}_k(t) + \frac{1}{t+1}R_{k}(t)\mathbbm{1}_{(k = k^*(t))}
	\]
\end{enumerate}

\vspace{-0.02in}
\noindent \rule[0.05in]{3.5in}{0.01in}


\begin{corollary}[Optimality of OSF]\label{th:onlineOptimality}
	Let assumptions of i) time-scale separation, ii) statistically identical users, and iii)  stochastically dominated channels hold.
			Theorem \ref{th:selectiveGBSperformance} implies that at each scheduling realization the experts converge to the corresponding selective fair throughputs, hence OSF chooses the action that maximizes \eqref{eq:idppMaximization}, and by Theorem \ref{th:optimalityIDPP} 
	\emph{OSF satisfies the SLA constraints and 
	yields sum throughput within $O(1/V)$ of the maximum in \eqref{eq:problem}}. 
\end{corollary}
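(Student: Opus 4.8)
The plan is to show that, within each scheduling realization, OSF asymptotically implements the DPP policy of Theorem \ref{th:optimalityIDPP}, so that DPP's guarantee transfers directly. The argument splits along the two time scales granted by Assumption \ref{as:timeScaleSeparation}: the outer scale, over which the virtual queue $Q(n)$ evolves while the spatial realization $\boldsymbol z_n$ stays fixed, and the inner scale $t$, over which the experts and the scheduler run. First I would fix a realization $n$ and note that $Q(n)$, hence the coefficient $Q(n)/V$ in the expert-selection rule, is held constant throughout the inner loop. Each expert $\text{GBS}(S_i)$ is an ordinary GBS restricted to $S_i$, so by the convergence underlying Theorem \ref{th:selectiveGBSperformance} (and \cite{Vijay,stolyar}) its accumulated throughputs satisfy $\sum_{k\in S_i}\overline{x}^{S_i}_k(t)\to T(S_i)$ almost surely. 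Since there are only $K_n$ experts, the inner maximizer
\[
S^*(t)=\arg\max_{\{S_i\}_i}\Big[\textstyle\sum_{k\in S_i}\overline{x}^{S_i}_k(t+1)+|S_i|\tfrac{Q(n)}{V}\Big]
\]
stabilizes for large $t$ to the nested set maximizing $T(S_i)+|S_i|Q(n)/V$.

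The second step is the structural reduction that makes $O(K)$ nested experts sufficient. The DPP inner objective \eqref{eq:idppMaximization} has the form $\sum_k T_k^*(\boldsymbol z_n,\boldsymbol y)+\tfrac{Q(n)}{V}\sum_k y_k$, that is, a selective-fair throughput plus a penalty depending on $\boldsymbol y$ only through $|S|=\sum_k y_k$. By subspace monotonicity (Lemma \ref{lem:orderingLemma} via Corollary \ref{cor:2}), for each cardinality $j$ the nested set $S_j=\{\sigma(1),\dots,\sigma(j)\}$ maximizes $T(S)$ among all $S$ with $|S|=j$; since the penalty is cardinality-only, the global maximizer of the DPP objective can be taken to be one of the nested sets $\{S_i\}$. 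Hence the stabilized OSF choice coincides with the DPP choice, and the admitted count $D(n)=|S^*(t_n)|$ matches that of DPP. Invoking Theorem \ref{th:selectiveGBSperformance} once more for the \emph{actual} scheduler (which runs GBS on $S^*$ using $\overline{x}_k(t)$, not the experts' iterates), the realized per-user averages converge to $T_k^*(\boldsymbol z_n,\boldsymbol y)$, so OSF reproduces both the throughput and the queue dynamics of DPP. Theorem \ref{th:optimalityIDPP} then delivers SLA feasibility and the $O(1/V)$ optimality gap.

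The main obstacle is the two-time-scale coupling rather than any single inequality. One must ensure the inner-loop convergence is fast enough that the expert selection locks onto the DPP-optimal nested set before the realization ends; this is exactly what Assumption \ref{as:timeScaleSeparation} provides, at the price of a $\delta$ slack between the empirical averages and the true selective-fair rates. The remaining care is to check that this $\delta$ enters only as an additive $O(\delta)$ perturbation of the per-realization reward and of the queue service, so that it is dominated by, and can be folded into, the $O(1/V)$ term (equivalently, letting $\delta\to 0$ recovers the clean bound). A secondary point worth verifying is that the argmax stabilization is uniform over the finitely many realizations encountered, which follows since $K_n$ and the set of distinct spatial configurations are bounded under statistically identical users.
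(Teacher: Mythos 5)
Your proof is correct and follows essentially the same route as the paper's own argument, which is embedded in the corollary statement itself: per-realization convergence of the experts (Theorem \ref{th:selectiveGBSperformance}) makes OSF's stabilized choice coincide with the DPP maximization \eqref{eq:idppMaximization}, and Theorem \ref{th:optimalityIDPP} then delivers the SLA and $O(1/V)$ guarantees. Your explicit justification of why the $O(K)$ nested experts suffice to solve the unrestricted maximization in \eqref{eq:idppMaximization} --- the queue penalty depends on $\boldsymbol{y}$ only through the cardinality, so subspace monotonicity via Corollary \ref{cor:2} lets the argmax be taken over nested sets --- together with your tracking of the $\delta$-slack from Assumption \ref{as:timeScaleSeparation}, fills in details the paper leaves implicit, but it is the same decomposition rather than a different approach.
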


A few remarks about OSF operation: searching for the best expert and storing data for all experts is $O(K)$, hence OSF has low-complexity. 
Also, to determine $D(n)$, we need an estimate of the number of users admitted in last realization. For this, we have used the cardinality of the selected users by the best expert at the last slot. Theoretically, due to the time-scale separation, all the experts and the scheduler have converged and hence this is indeed the number of users served in that realization. In our simulations, the convergence is not completely reached but this method remains accurate.


\section{Numerical Results}

\begin{figure*}[t!]
	\centering
 \hspace{-0.65cm}	\subcaptionbox{}[.15\linewidth][c]{
 	\vspace{0.2cm}
		\begin{overpic}[scale=0.135]{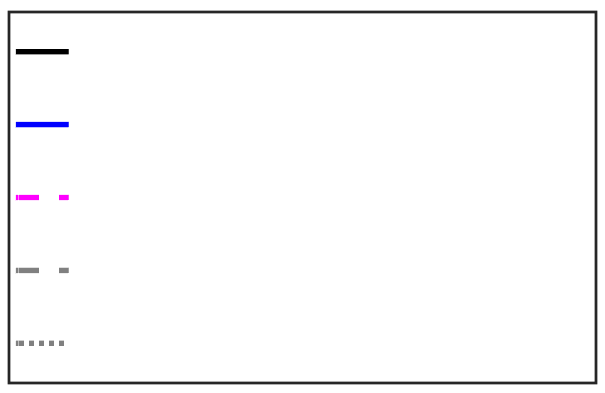}
			\put(16,54){\tiny{Online Selective Fair}}
			\put(16,42){\tiny{No Admission Control}}
			\put(16,29){\tiny{Threshold $-4.95$dB}}
			\put(16,17){\tiny{Threshold $-3$dB}}
			\put(16,6){\tiny{Threshold $-1$dB}}
		\end{overpic}
	}\hspace{-0.08in}
	\subcaptionbox{}[.27\linewidth][c]{%
			\begin{overpic}[scale=0.17]{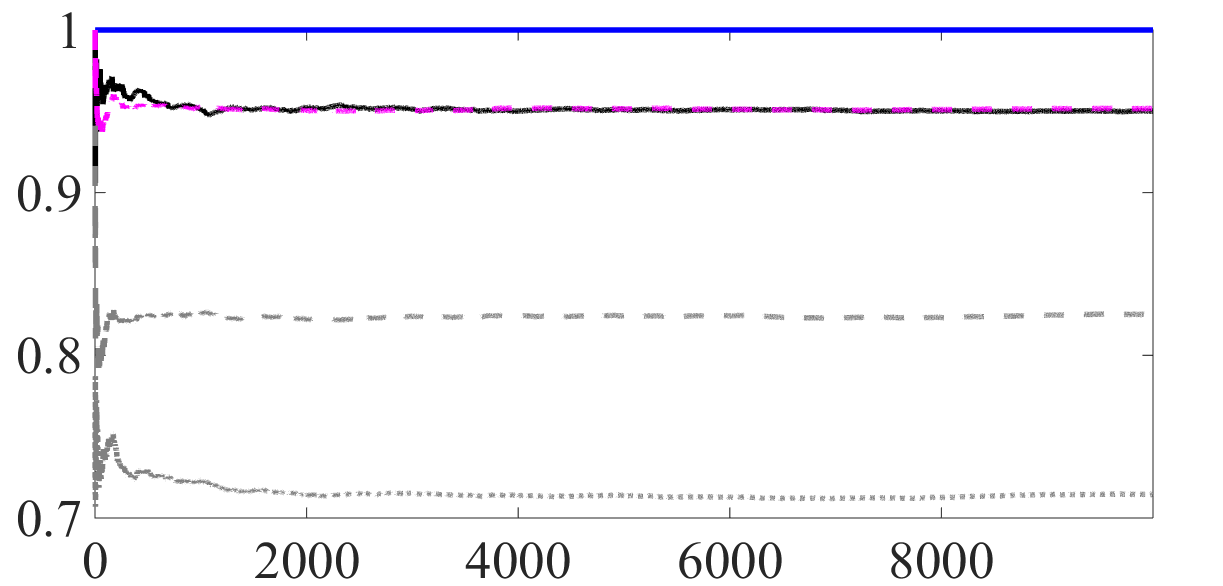}
			\put(22,-3.5){\footnotesize Scheduling realization index}
			\put(30,48){\footnotesize Admission Probability}
			\put(18, 42){\oval(3, 11)}
			\put(19.5,37){\vector(1,-1){8}}	
			\put(28,27){\small meet SLA}
		\end{overpic}
	}~~
	\subcaptionbox{}[.27\linewidth][c]{%
		\begin{overpic}[scale=0.17]{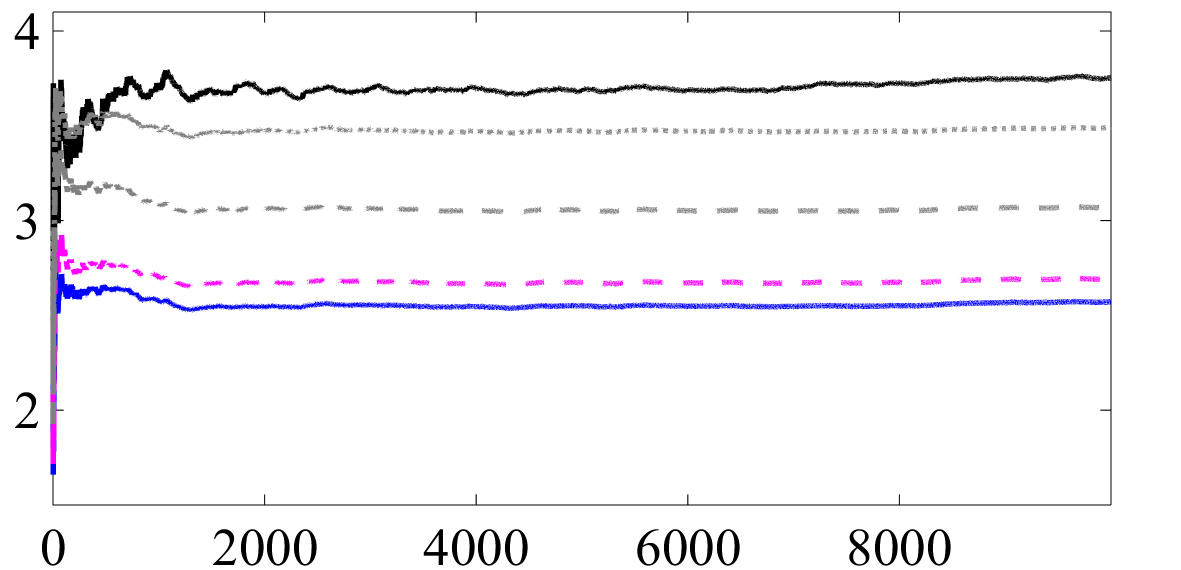}
			\put(22, -3.5){\footnotesize Scheduling realization index}
			\put(28,49){\footnotesize Throughput [bits/s/Hz]}		
			\put(85,25){\vector(0,1){16.7}}	
			\put(85,36){\vector(0,-1){11.5}}
					
			\put(45.5, 32.5){\scriptsize $40\%$ improvement}
		\end{overpic}
	}~
	\subcaptionbox{}[.27\linewidth][c]{%
		\begin{overpic}[scale=0.17]{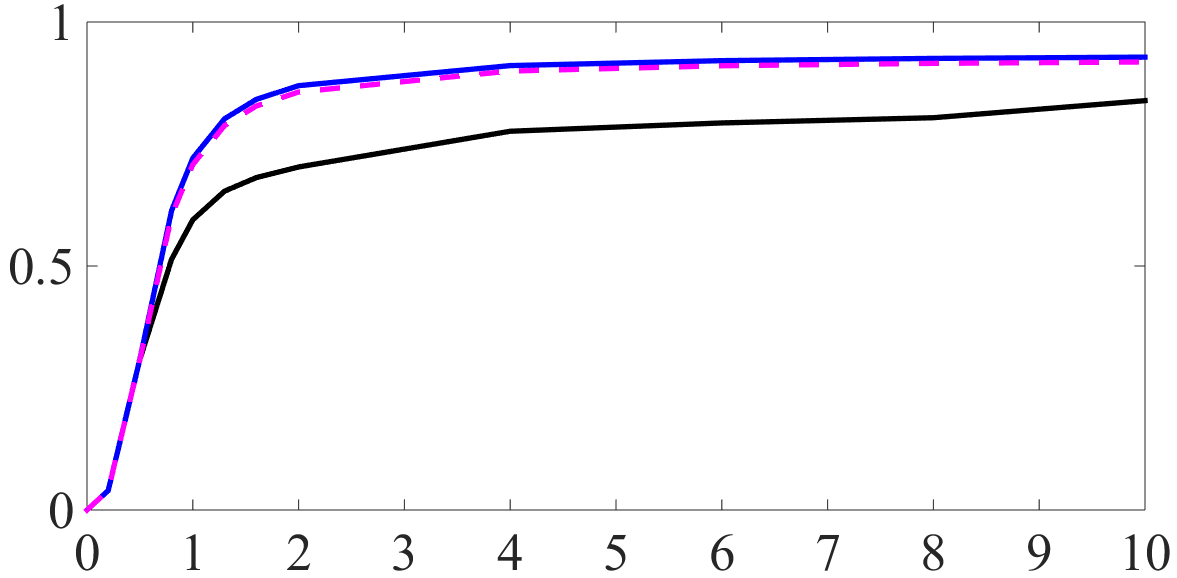}
			\put(55,-3.5){\footnotesize $\alpha$}
			\put(32,49){\footnotesize {Price of Fairness}}		
		\end{overpic}
	}
	\vspace{0.1in}
	\caption{Simulation results for blocking probability $5\%$: (a) Plot legends. (b) Empirical admission probabilities for $\alpha=1$. (c) Empirical average system throughput for $\alpha=1$. (d) Price of Fairness for different levels of fairness (values of $\alpha$).}\label{fig:SimResults095}
\end{figure*}

In this section we evaluate the performance of OSF against policies that block users with average SNR below a threshold, and against the case where no admission control is done and all users are served in a fair manner. We consider a system with $100$ subscribers, each requesting a probabilistic SLA of being served $95\%$ of the time. The system evolves as per figure \ref{fig:realizations}. At each scheduling realization, a subscriber is active with probability $0.1$ and is placed uniformly at random in a cell. The parameters are chosen such that a user at the edge of the cell has an average SNR of $-5dB$. Each scheduling realization lasts for $3000$ time slots, and at each slot the channel realizations are drawn from a Rayleigh fading distribution.      

The results are presented in figure \ref{fig:SimResults095}.  Figures \ref{fig:SimResults095}b-\ref{fig:SimResults095}c show the case $\alpha=1$; the OSF policy achieves the SLA (\ref{fig:SimResults095}b) and outperforms the best naive threshold admission control that also achieves the SLA by approximately $40\%$ in terms of total throughput (\ref{fig:SimResults095}c). In addition, it outperforms the policy that does no admission control by a wider margin. Figure \ref{fig:SimResults095}d compares $PoF$  for different values of $\alpha$. In low values of  $\alpha$, where fairness constraints are loose, all policies have similar performance. As $\alpha$ increases, having a good user admission strategy makes fairness cheaper; the $PoF$  of OSF  increases  slower than the other two policies.  \emph{We attribute the gains of our approach to accurately considering the actual impact of blocking to $PoF$.}

We also experimented with more stringent values of the SLA. For $\alpha=1$  the gain in system throughput with respect to the best threshold  policy that satisfies the SLA is  $10\%$ and $2\%$ for corresponding selection probabilities $99\%$ and $99.9\%$. Decreasing gain for more stringent SLA is to be expected, since for selection probability 1 all the policies will yield the same $PoF$-they are all forced to accept all users at all times.  However, we mention that  the probability of good coverage  of LTE is measured in drive tests to be 90-95$\%$ \cite{EricssonLTE}, which induces a fairly large amount of blocking. 


\section{Conclusions}

In this paper, we introduced \emph{selective fairness}, the idea of providing fairness to some users and blocking the others, in order to mitigate the adverse effect of users with very poor channel quality. Extending this concept to the stochastic setting, we derived an online policy that maximizes the system's throughput subject to satisfying an SLA on the user blocking probability. Our intelligent blocking outperforms by $40\%$ naive approaches that simply block low-SNR users.
It is worth noting that our results can be easily extended to multiple SLA classes using one queue for each class, as well as systems with multiple cells using an operator-wide global queue. Analyzing the transient behavior of the selective GBS with experts and extending our results to cases where users have different spatial distributions are very interesting topics for further study. 

\bibliographystyle{IEEEtran}
\bibliography{bibliography}

\begin{thebibliography}{10}
\providecommand{\url}[1]{#1}
\csname url@samestyle\endcsname
\providecommand{\newblock}{\relax}
\providecommand{\bibinfo}[2]{#2}
\providecommand{\BIBentrySTDinterwordspacing}{\spaceskip=0pt\relax}
\providecommand{\BIBentryALTinterwordstretchfactor}{4}
\providecommand{\BIBentryALTinterwordspacing}{\spaceskip=\fontdimen2\font plus
\BIBentryALTinterwordstretchfactor\fontdimen3\font minus
  \fontdimen4\font\relax}
\providecommand{\BIBforeignlanguage}[2]{{%
\expandafter\ifx\csname l@#1\endcsname\relax
\typeout{** WARNING: IEEEtran.bst: No hyphenation pattern has been}%
\typeout{** loaded for the language `#1'. Using the pattern for}%
\typeout{** the default language instead.}%
\else
\language=\csname l@#1\endcsname
\fi
#2}}
\providecommand{\BIBdecl}{\relax}
\BIBdecl

\bibitem{Kushner}
H.~J. Kushner and P.~A. Whiting, ``Convergence of proportional-fair sharing
  algorithms under general conditions,'' \emph{IEEE Transactions on Wireless
  Communications}, pp. 1250--1259, 2004.

\bibitem{tseDumbAntennas}
P.~Viswanath, D.~N.~C. Tse, and R.~Laroia, ``Opportunistic beamforming using
  dumb antennas,'' \emph{IEEE Transactions on Information Theory}, pp.
  1277--1294, 2002.

\bibitem{pof}
D.~Bertsimas, V.~F. Farias, and N.~Trichakis, ``The price of fairness,''
  \emph{Oper. Res.}, pp. 17--31, 2011.

\bibitem{knopp}
R.~Knopp and P.~A. Humblet, ``Information capacity and power control in
  single-cell multiuser communications,'' in \emph{IEEE ICC}, 1995.

\bibitem{Vijay}
R.~Agrawal and V.~Subramanian, ``Optimality of certain channel aware scheduling
  policies,'' in \emph{Allerton}, 2002.

\bibitem{stolyar}
A.~L. Stolyar, ``On the asymptotic optimality of the gradient scheduling
  algorithm for multiuser throughput allocation,'' \emph{Oper. Res.}, pp.
  12--25, 2005.

\bibitem{Huang06}
J.~Huang, V.~G. Subramanian, R.~Agrawal, and R.~A. Berry, ``Downlink scheduling
  and resource allocation for {OFDM} systems,'' \emph{IEEE Transactions on
  Wireless Communications}, pp. 288--296, 2009.

\bibitem{eryilmaz17}
A.~Eryilmaz and I.~Koprulu, ``Discounted-rate utility maximization {(DRUM)}: A
  framework for delay-sensitive fair resource allocation,'' in \emph{WiOpt},
  2017.

\bibitem{nomadic}
P.~Bender, P.~Black, M.~Grob, R.~Padovani, N.~Sindhushyana, and S.~Viterbi,
  ``{CDMA/HDR}: a bandwidth efficient high speed wireless data service for
  nomadic users,'' \emph{IEEE Communications Magazine}, pp. 70--77, 2000.

\bibitem{leith}
A.~Leith, M.~S. Alouini, D.~I. Kim, X.~Shen, and Z.~Wu, ``Flexible
  proportional-rate scheduling for {OFDMA} system,'' \emph{IEEE Transactions on
  Mobile Computing}, pp. 1907--1919, 2013.

\bibitem{capozzi13}
F.~Capozzi, G.~Piro, L.~A. Grieco, G.~Boggia, and P.~Camarda, ``{Downlink
  Packet Scheduling in {LTE} Cellular Networks: Key Design Issues and a
  Survey},'' \emph{IEEE Communications Surveys Tutorials}, pp. 678--700, 2013.

\bibitem{neelycaire}
D.~Bethanabhotla, G.~Caire, and M.~J. Neely, ``{WiFlix}: Adaptive video
  streaming in massive {MU-MIMO} wireless networks,'' \emph{IEEE Transactions
  on Wireless Communications}, pp. 4088--4103, 2016.

\bibitem{kelly}
F.~P. Kelly, A.~K. Maulloo, and D.~K.~H. Tan, ``Rate control in communication
  networks: shadow prices, proportional fairness and stability,'' \emph{Journal
  of the Operational Research Society}, pp. 237--252, 1998.

\bibitem{bertsimas}
D.~Bertsimas, V.~F. Farias, and N.~Trichakis, ``On the efficiency-fairness
  trade-off,'' \emph{Manage. Sci.}, pp. 2234--2250, 2012.

\bibitem{admissionLTE}
X.~Xiang, C.~Lin, X.~Chen, and X.~Shen, ``Toward optimal admission control and
  resource allocation for {LTE-A} femtocell uplink,'' \emph{IEEE Transactions
  on Vehicular Technology}, pp. 3247--3261, 2015.

\bibitem{Li05}
M.~J. Neely, E.~Modiano, and C.~P. Li, ``Fairness and optimal stochastic
  control for heterogeneous networks,'' \emph{IEEE/ACM Transactions on
  Networking}, pp. 396--409, 2008.

\bibitem{Bonald03}
T.~Bonald and A.~Prouti\`{e}re, ``Wireless downlink data channels: User
  performance and cell dimensioning,'' in \emph{{ACM} MobiCom}, 2003.

\bibitem{georgiadis06}
L.~Georgiadis, M.~J. Neely, and L.~Tassiulas, ``Resource allocation and
  cross-layer control in wireless networks,'' \emph{Found. Trends Netw.}, pp.
  1--144, 2006.

\bibitem{radunovic}
B.~Radunovic and J.~Y.~L. Boudec, ``A unified framework for max-min and min-max
  fairness with applications,'' \emph{IEEE/ACM Transactions on Networking}, pp.
  1073--1083, 2007.

\bibitem{mowalrand}
J.~Mo and J.~Walrand, ``Fair end-to-end window-based congestion control,''
  \emph{IEEE/ACM Transactions on Networking}, pp. 556--567, 2000.

\bibitem{Combes10}
R.~Combes, Z.~Altman, and E.~Altman, ``On the use of packet scheduling in
  self-optimization processes: Application to coverage-capacity optimization,''
  in \emph{WiOpt}, 2010.

\bibitem{EricssonLTE}
J.~Furuskog, K.~Werner, M.~Riback, and B.~Hagerman, ``Field trials of {LTE}
  with 4×4 {MIMO},'' \emph{Ericsson review}, 2010.

\bibitem{neely10}
M.~J. Neely, \emph{Stochastic Network Optimization with Application to
  Communication and Queueing Systems}.\hskip 1em plus 0.5em minus 0.4em\relax
  Morgan and Claypool Publishers, 2010.

\end{thebibliography}

\newpage
\appendix
\begin{IEEEproof}[Proof of Lemma \ref{lem:orderingLemma}]
	Without loss of generality we will compare two sets of users $\mathcal{K} = \{1,\dots,K-1,K\}$ and $\mathcal{K}' = \{1,\dots,K-1,K'\}$ (here indices are not indicative of channel quality), where the  the channel state probability distribution of all users is the same except of user $K'$, $q_l$,  which dominates that of $K$, $p_l$:
	\[
	\sum_{l=1}^m q_l \geq \sum_{l=1}^m p_l,~~m=1,\dots,L.
	\]
	and for weights  $\beta_1\geq \beta_2 \geq \dots \geq \beta_L\geq 0$, it follows
	\begin{equation}\label{eq:posdom}
	\sum_{l=1}^L q_l\beta_l \geq \sum_{l=1}^L p_l\beta_l,  
	\end{equation}
	which we will use later.

	Let $\Gamma(\mathcal{A})$ be the feasible rate region of user set $\mathcal{A}$. Then we should show that $\Gamma(\mathcal{K}) \subseteq \Gamma(\mathcal{K}')$. If we prove this pairwise comparison, then the subspace monotonicity follows by extending to all sets of same cardinality using the stochastic dominance order.

	Define $\Gamma(\mathcal{K}, \boldsymbol{s})$ as the set of feasible throughput vectors given that the channel states of  users $\{1,\dots,K-1\}$ are fixed to $\boldsymbol{s}\in\mathcal{R}^{K-1}$ and only the channels of the last user vary.
	More specifically, we will show that $\Gamma(\mathcal{K}, \boldsymbol{s}) \subseteq \Gamma(\mathcal{K}', \boldsymbol{s}), \forall \boldsymbol{s}$, from which the result follows since 
	\[
	\Gamma(\mathcal{K}) = \sum_{\boldsymbol{s}}\pi(\boldsymbol{s})\Gamma( \mathcal{K},\boldsymbol{s}) \subseteq  \sum_{\boldsymbol{s}}\pi(\boldsymbol{s})\Gamma( \mathcal{K}',\boldsymbol{s}) = \Gamma(\mathcal{K}').
	\]
	where $\pi(\boldsymbol{s}) = \Pi_{k=1}^{K-1}\mathbb{P}(R_k(t) = r_k(\boldsymbol{s}))$ is the probability of state $\boldsymbol{s}$.

	Fix some $\boldsymbol{s}\in\mathcal{R}^{K-1}$, and choose a vector $\boldsymbol u\in \Gamma(\mathcal{K}, \boldsymbol{s})$, we want to show that $\boldsymbol u\in \Gamma(\mathcal{K}', \boldsymbol{s})$. 
	Denote $\phi_{k,l}$ the fraction of time user $k$ is scheduled when the channel state of the last user is $l$, where  $\phi_{k,l}\in[0,1]$ and $\sum_k \phi_{k,l} \leq 1, \forall l$. The following hold: 
	\begin{align*}
	\left\{\begin{array}{l}
	u_k= \sum_{l}p_l\phi_{k,l}r_k(\boldsymbol{s}), ~~ \forall k < K \\ 
	u_K= \sum_{l}p_l\phi_{K,l}R^{l}.  
	\end{array}\right.
	\end{align*}
	Here, let us observe that since the channel states $\boldsymbol{s}$ are fixed, the throughputs $u_k, k<K$ depend only on the total fraction of time assigned to these users $1-\sum_l p_l \phi_{K,l}$ irrespective of state $l$. To this end, consider the optimization problem:
	\begin{align}
	\min_{[\phi_{K,l}]_{l=1,\dots,L}} & \sum_l p_l \phi_{K,l}\label{eq:opt}\\
	\text{s.t. } & u_K= \sum_{l}p_l\phi_{K,l}R^{l}\notag\\
	& \phi_{K,l}\in[0,1],\notag
	\end{align}
	where the solutions of this problem ensure both the achievability of throughput $u_K$ (due to first constraint) but also that there remain enough time fractions to achieve $u_k, k<K$ (due to the objective). Let $\boldsymbol \phi^*$ be a solution to \eqref{eq:opt}, we have
	\[
	\sum_l p_l \phi_{K,l}^* \leq  \sum_l p_l \phi_{K,l},
	\]
	hence the time users $k<K$ are scheduled $1-\sum_l p_l \phi_{K,l}^*$ is large enough   such that there must exist coefficients $(\phi_{k,l}^*)_{k<K}$ for which:
	\begin{equation}
	u_k= \sum_{l}q_l\phi_{k,l}^*r_k(\boldsymbol{s}), ~~ \forall k < K. \label{eq:uk}
	\end{equation}
	
	Additionally, due to the form of the optimization problem \eqref{eq:opt}, we must also have that  $\phi_{K,1}^*\geq \phi_{K,2}^* \geq \dots \geq \phi_{K,L}^*$, hence it is also
	$\phi_{K,1}^*R^1\geq \phi_{K,2}^*R^2 \geq \dots \geq \phi_{K,L}^*R^L$, and using \eqref{eq:posdom} it follows
	\begin{align*}
	\sum_lq_l\phi_{k,l}^*R^l\leq   \sum_lp_l\phi_{k,l}^*R^l,
	\end{align*}
	and hence
	\[
	u_K= \sum_{l}q_l\phi_{K,l}^*R^{l},
	\]
	which combined with \eqref{eq:uk}  proves that $\boldsymbol u\in \Gamma(\mathcal{K}', \boldsymbol{s})$, and concludes the proof that $\Gamma(\mathcal{K}, \boldsymbol{s}) \subseteq \Gamma(\mathcal{K}', \boldsymbol{s}), \forall \boldsymbol{s}$.
\end{IEEEproof}

\begin{IEEEproof}[Proof of Theorem \ref{th:selectiveGBSperformance}]
	We assume that $T(S_i)\neq T(S_j)$, define $\mathcal{B}_i(\delta) = \left(T(S_i) - \delta, T(S_i) + \delta\right)$ and choose $\delta_0>0$ such that 
	\[
		\mathcal{B}_i(\delta_0) \cap \mathcal{B}_j(\delta_0) = \emptyset, \forall i,j \in\{s_{min},\dots, K\} 
	.\]
	We also define the following events:  
	\[
	\mathcal{A}_i^{\delta}(t) = \left\{\left|\sum_{k\in S_i}\overline{x}^{S_i}_k(\tau)  - T(S_i)\right|\leq \delta, \forall \tau \geq t\right\}
	.\] By the almost sure convergence of GBS \cite{Vijay,stolyar}, we have: for every $\epsilon>0$, there exists a finite $M_0(\epsilon)$ such that  
	\begin{equation}\label{eq:M0}
	\mathbb{P}\left\{\mathcal{A}_i^{\delta_0}(t)\right\} > \sqrt[K]{1-\epsilon}, \forall t>M_0(\epsilon), \forall i 
	.\end{equation}
	 This, together with how we chose $\delta_0$, implies that $\forall t\geq M_0(\epsilon)$:
	 	 \begin{equation}\label{eq:selectiveGBSperfProof}
	 \mathbb{P}\left\{S^*(\tau) = S^*, \forall \tau\geq t \right\} \geq \prod_{i=s_{min}}^K\mathbb{P}\left\{\mathcal{A}_i^{\delta_0}(t)\right\}   > 1-\epsilon.
	 \end{equation}
	 Now, define the event that determines the convergence:
	 \[
	 \mathcal{C}^{\delta}(t) = \left\{\left|\sum_{k}\overline{x}^{sel}_k(\tau) - T(S^*)\right| \leq \delta , \forall \tau\geq t\right\},
	 \]
	 and fix any $\epsilon,\delta > 0$. From the almost sure convergence property of GBS, we know that if for some $S^*(t) = S^*, \forall t\geq \tau_0$ for some $\tau_0$ then there exists a constant $M_1(\delta)$ such that the above event is true for all $t_0 \geq \tau_0 + M_1(\delta)$. In addition, from \eqref{eq:selectiveGBSperfProof} we have that $S^*(t) = S^*, \forall t\geq \tau_0$ with probability  at least $1-\epsilon$ for $\tau_0 \geq M_0(\epsilon)$, as defined in $\eqref{eq:M0}$. The above discussion implies that, there exists a $t'_0(\epsilon, \delta) = M_0(\epsilon) + M_1(\delta)$ (any value greater than this quantity will also do) such that 
	 \[
	 \left|\mathbb{P}\left\{\mathcal{C}^{\delta}(t)\right\} - 1\right| < \epsilon, \forall t\geq t'_0(\epsilon,\delta)
	 ,\]which implies $\lim_{t\rightarrow\infty}\mathbb{P}\left\{\mathcal{C}^{\delta}(t)\right\} = 1, \forall \delta > 0$.
\end{IEEEproof}

\begin{IEEEproof}[Proof of Lemma \ref{lem:equivalenceLemma}]
 Due to the identical spatial distribution for all users, it follows that for any $\boldsymbol{y}\in\Pi$ and any two users $k,k'\in\mathcal{K}$ we must have:
	 \[
	 \mathbb{P}_{\boldsymbol{Z}}\left(y_k(\boldsymbol{Z}) = 1 | k\in\boldsymbol{Z} \right) = \mathbb{P}_{\boldsymbol{Z}}\left(y_k'(\boldsymbol{Z}) = 1 | k'\in\boldsymbol{Z} \right).
	 \] 
	 Let $p(\boldsymbol{y})= \mathbb{P}_{\boldsymbol{Z}}\left(y_k(\boldsymbol{Z}) = 1 | k\in\boldsymbol{Z} \right) $, conditioning on $\left(k\in\boldsymbol{Z}\right)$  we obtain: 
	 \begin{align} \nonumber
	 &\mathbb{E}_{\boldsymbol{Z}}\left\{\sum_{k\in\mathcal{K}(\boldsymbol{Z})}y_{k}(\boldsymbol{Z})\right\} =\\
	 & \hspace{0.4in}= \sum_{k}\mathbb{P}_{\boldsymbol{Z}}\left(k\in\boldsymbol{Z}\right)\mathbb{P}_{\boldsymbol{Z}}\left(y_k(\boldsymbol{Z}) = 1 | k\in\boldsymbol{Z} \right) \nonumber\\ 
	 & \hspace{0.4in}= p(\boldsymbol{y})\sum_{k}\mathbb{P}_{\boldsymbol{Z}}\left(k\in\boldsymbol{Z}\right) = p(\boldsymbol{y})\mathbb{E}_{\boldsymbol{Z}}\left\{|\mathcal{K}(\boldsymbol{Z})|\right\}\notag
	 .\end{align}
	 Eq. \eqref{eq:stoconst} then implies that $p(\boldsymbol{y}) \geq 1-\epsilon$, yielding the desired.
\end{IEEEproof}

\begin{IEEEproof}[Proof of Theorem \ref{th:optimalityIDPP}]
	Let $\boldsymbol{y}^*(\boldsymbol z_n),~n=1,2,\dots$ denote the optimal solution of \eqref{eq:stoprob}-\eqref{eq:stoconst}, i.e. a (possibly randomized) function from a scheduling realization to a user admission decision; existence of optimal stationary randomized policies is established in \cite{neely10}. 
	To prove optimality of DPP, we will compare to the performance of $\boldsymbol{y}^*(\boldsymbol z_n)$ with respect to the Lyapunov drift. Indeed, we first define the drift of policy $\boldsymbol{y}$:
	\[
		\Delta(\boldsymbol{y}, \boldsymbol{Z}, Q) =  \sum_{k\in\mathcal{K}(\boldsymbol{Z})}\left(VT^*_k(\boldsymbol{Z}, \boldsymbol{y}) + Qy_k\right).  
	\]
	Recall that $\boldsymbol{y}^n$ denotes the decision of DPP at realization $n$, and observe that it is designed to minimize the quantity $\Delta(.)$.
	Then, for a given scheduling realization $\boldsymbol{z}_n$ and given value of the virtual queue $Q(n)$, we can show that 
	\begin{align} \nonumber
	\mathbb{E}\big\{&Q^2(n+1) - Q^2(n) \big\} - V\mathbb{E}\left\{\sum_k T^*_k(\boldsymbol{z_n}, \boldsymbol{y}^n(\boldsymbol{z}_n))\right\}  \\ \nonumber 
	&\leq B + Q(n)\mathbb{E}\left\{A(\boldsymbol{z}_n)\right\} - \mathbb{E}\left\{\Delta(\boldsymbol{y}^n, \boldsymbol{z}_n, Q(n))\right\} \\ \nonumber 
	&\leq B + Q(n)\mathbb{E}\left\{A(\boldsymbol{z}_n)\right\} - \mathbb{E}\left\{\Delta(\boldsymbol{y}^*, \boldsymbol{z}_n, Q(n))\right\}
	,\end{align} 
	where $B$ is a constant that depends only the parameters of the system. Additionally, from optimality of $\boldsymbol{y}^*$ we have 
	\begin{align} \nonumber
	\mathbb{E}\big\{Q^2(n+1) - Q^2(n) \big\} - V\mathbb{E}\left\{\sum_k T^*_k(\boldsymbol{z_n}, \boldsymbol{y}^n(\boldsymbol{z}_n))\right\}  \\ \nonumber 
	\leq B - V \mathbb{E}\left\{\sum_{k\in\mathcal{K}\boldsymbol{Z})}T^*_k(\boldsymbol{z_n}, \boldsymbol{y}^*(\boldsymbol{z}_n))\right\} 
	,\end{align}
	from which we can deduce that (i) $Q(n)$ is (mean rate) stable therefore the SLA constraints hold and (ii) the value of the average throughput less than the optimal by at most $B/V$, finishing the proof. 
\end{IEEEproof}

\end{document}